\newif\ifabstract
\newif\iffull
\newtheorem{theorem}{Theorem}[section]
\newtheorem{lemma}[theorem]{Lemma}
\newtheorem{proposition}[theorem]{Proposition}
\newtheorem{claim}[theorem]{Claim}
\newtheorem{corollary}[theorem]{Corollary}
\newtheorem{definition}[theorem]{Definition}
\newcommand{\diam}		{\Delta}
\DeclareMathOperator*{\argmin}{argmin}
\renewcommand{\phi}{\varphi}
\newcommand{\eps}{\varepsilon}
\newcommand{\comment}[1]{}
\newcommand{\randbem}[3]{}%
\definecolor{darkred}{rgb}{1, 0.1, 0.3}
\definecolor{darkgreen}{rgb}{0.5, 0.8, 0.1}
\definecolor{darkpurple}{rgb}{1.0, 0, 1.0}
\definecolor{darkblue}{rgb}{0, 0, 1.0}
\newcommand{\denselist}{\itemsep 0pt\parsep=1pt\partopsep 0pt}
\newcommand{\X}		{\mathcal{X}}
\newcommand{\Y}		{\mathcal{Y}}
\newcommand{\Z}		{\mathcal{Z}}
\newcommand{\C}		{\mathcal{C}}
\newcommand{\U}		{\mathcal{U}}
\newcommand{\oset}		{K}
\newcommand{\stemv}	{s}
\newcommand{\noembedding}	{outlier embedding\xspace}
\title{Metric embeddings with outliers}
\author{
Anastasios Sidiropoulos\thanks{Dept.~of Computer Science and Engineering and Dept.~of Mathematics, The Ohio State University. Columbus, OH, USA.
Supported by NSF grants CCF 1423230 and CAREER 1453472.}
\and
Yusu Wang\thanks{Dept.~of Computer Science and Engineering, The Ohio State University. Columbus, OH, USA. The work is partially supported by NSF under grant CCF-1319406.}
}
\date{}
\begin{document}

\maketitle

\thispagestyle{empty}
\setcounter{page}{0}

\begin{abstract}
We initiate the study of metric embeddings with \emph{outliers}.
Given some metric space $(X,\rho)$ we wish to find a small set of outlier points $\oset\subset X$ and 
either an isometric or a low-distortion embedding of $(X\setminus \oset,\rho)$ into some target metric space.
This is a natural problem that captures scenarios where a small fraction of points in the input corresponds to noise. 

For the case of isometric embeddings we derive polynomial-time approximation algorithms for minimizing the number of outliers when the target space is an ultrametric, a tree metric, or some constant-dimensional Euclidean space. 
The approximation factors are $3$, $4$ and $2$, respectively. 
For the case of embedding into an ultrametric or tree metric, we further improve the running time to $O(n^2)$ for an $n$-point input metric space, which is optimal.
We complement these upper bounds by showing that outlier embedding into ultrametrics, trees, and $d$-dimensional Euclidean space for any $d\geq 2$ are all NP-hard, as well as NP-hard to approximate within a factor better than 2 assuming the Unique Game Conjecture.

For the case of non-isometries we consider embeddings with small $\ell_{\infty}$ distortion. 
We present polynomial-time \emph{bi-criteria} approximation algorithms.
Specifically, given some $\eps>0$, let $k_\eps$ denote the minimum number of outliers required to obtain an embedding with distortion $\eps$. 
For the case of embedding into ultrametrics we obtain a polynomial-time algorithm which computes a set of at most $3k_{\eps}$ outliers and an embedding of the remaining points into an ultrametric with distortion $O(\eps \log n)$.
Finally, for embedding a metric of unit diameter into constant-dimensional Euclidean space we present a polynomial-time algorithm which computes a set of at most $2k_{\eps}$ outliers and an embedding of the remaining points with distortion $O(\sqrt{\eps})$.
\end{abstract}

\newpage
\setcounter{page}{1}

\section{Introduction}
Metric embeddings provide a framework for addressing in a unified manner a variety of data-analytic tasks.
Let ${\cal X}=(X,\rho)$, ${\cal Y}=(Y,\rho')$ be metric spaces.
At the high level, a metric embedding is a mapping $f:X\to Y$ that either is isometric or preserves the pairwise distances up to some small error called the \emph{distortion}\footnote{Various definitions of distortion have been extensively considered, including multiplicative, additive, average, and $\ell_p$ distortion, as well as expected distortion when the map $f$ is random \cite{bartal1996probabilistic}.}.
The corresponding computational problem is to decide whether an isometry $f$ exists or, more generally, to find a mapping $f$ with minimum distortion.
The space ${\cal Y}$ might either be given or it might be constrained to be a member of a collection of spaces, such as trees, ultrametrics, and so on.
The problems that can be geometrically abstracted using this language include phylogenetic reconstruction (e.g.~via embeddings into trees \cite{agarwala1998approximability,ailon2005fitting,badoiu2007approximation} or ultrametrics \cite{farach1995robust,alon2008ordinal}), visualization (e.g.~via embeddings into constant-dimensional Euclidean space \cite{badoiu2005low,fellows2008parameterized,badoiu2005approximation,matouvsek2010inapproximability,edmonds2010inapproximability,BadoiuCIS06,BergOS13,fellows2009distortion}), and many more (for a more detailed exposition we refer the reader to \cite{indyk20048,indyk2001algorithmic}).

Despite extensive research on the above metric embedding paradigm, essentially nothing is known when the input space ${\cal X}$ can contain \emph{outliers}.
This scenario is of interest for example in applications where outliers can arise from measurement errors.
Another example is when real-world data does not perfectly fit a model due to mathematical simplifications of physical processes.

We propose a generalization of the above high-level metric embedding problem which seeks to address such scenarios:
Given ${\cal X}$ and ${\cal Y}$ we wish to find some small $\oset\subseteq X$ and either an isometric or low-distortion  mapping $f:X\setminus \oset\to Y$.
We refer to the points in $\oset$ as \emph{outliers}.

We remark that it is easy to construct examples of spaces ${\cal X}$, ${\cal Y}$ where any embedding $f:X\to Y$ has arbitrarily large distortion (for any ``reasonable'' notion of distortion), yet there exists $x\in X$ and an isometry $f:X\setminus \{x\}\to Y$.  
Thus new ideas are needed to tackle the more general metric embedding problem in the presence of outliers.

\subsection{Our contribution}

\paragraph{Approximation algorithms.}
We focus on embeddings into ultrametrics, trees, and constant-dimensional Euclidean space. We first consider the problem of computing a minimum size set of outliers such that the remaining point-set admits an isometry into some target space.
We refer to this task as the \emph{minimum outlier embedding problem}.

\emph{Outlier embeddings into ultrametrics.}
It is well-known that a metric space is an ultrametric if and only if any $3$-point subset is an ultrametric.
We may therefore obtain a $3$-approximation as follows:
For all $(x,y,z)\in X^3$, if the triple $(x,y,z)$ is not an ultrametric then remove $x$, $y$, and $z$ from $X$.
It is rather easy to see that this gives a $3$-approximation for the minimum outlier embedding problem into ultrametrics (as for every triple of points that we remove, at least one of them must be an outlier in any optimal solution), with running time $O(n^3)$.
By exploiting further structural properties of ultrametrics, we obtain a $3$-approximation with running time $O(n^2)$.
We remark that this running time is optimal since the input has size $\Theta(n^2)$ and it is straightforward to show that any $3$-approximation has to read all the input (e.g.~even to determine whether ${\cal X}$ is an ultrametric, which corresponds to the case where the minimum number of outliers is zero).

\emph{Outlier embeddings into trees.}
Similarly to the case of ultrametrics, it is known that a space is a tree metric if and only if any $4$-point subset is a tree metric.
This similarly leads to a $4$-approximation algorithm in $O(n^4)$ time. 
We further improve the running time to $O(n^2)$, which is also optimal.
However, obtaining this improvement is significantly more complicated than the case of ultrametrics.

\emph{Outlier embeddings into $\mathbb{R}^d$.}
It is known that for any $d\geq 1$ any metric space admits an isometric embedding into $d$-dimensional Euclidean space if and only if any subset of size $d+3$ does \cite{menger1931new}.
This immediately implies a $(d+3)$-approximation algorithm for outlier embedding into $d$-dimensional Euclidean space  with running time $O(n^{d+3})$, for any $d\geq 1$.
Using additional rigidity properties of Euclidean space we obain a $2$-approximation with the same running time.

\paragraph{Hardness of approximation.}
We show that, assuming the Unique Games Conjecture \cite{khot2002power}, the problems of computing a minimum outlier embedding into ultrametrics, trees, and $d$-dimensional Euclidean space for any $d\geq 2$, are all NP-hard to approximate within a factor of $2-\nu$, for any $\nu>0$.
These inapproximability results are obtained by combining reductions from Vertex Cover to minimum outlier embedding and the known hardness result for the former problem \cite{KR08}.
Note that for the case of embedding into $d$-dimensional Euclidean space for any $d\geq 2$ this inapproximability result matches our upper bound.

\paragraph{Bi-criteria approximation algorithms.}
We also consider non-isometric embeddings.
All our results concern $\ell_{\infty}$ distortion.
For some outlier set $\oset\subseteq X$, the $\ell_{\infty}$ distortion of some map $f:X\setminus \oset\to Y$ is defined to be
\[
\sup_{x,y\in X\setminus \oset} \left|\rho(x,y)-\rho'(f(x),f(y))\right|.
\]
In this context there are two different objectives that we wish to minimize: the number of outliers and the distortion.
For a compact metric space $\Z = (Z, \rho_Z)$, denote by $\diam(\Z) = \sup_{z, z' \in Z} \rho_Z(z, z')$ the \emph{diameter of $\Z$}.

\begin{definition}[$(\eps,k)$-Outlier embedding]\label{def:bicriteria}
We say that $\X$ admits a \emph{$(\eps,k)$-outlier embedding} into $\Y$ if there exists $\oset \subset X$ with $|\oset|\leq k$ and some $f:X\setminus \oset \to Y$ 
with $\ell_{\infty}$ distortion at most $\eps \diam(\X)$.
We refer to $\oset$ as the \emph{outlier set that witnesses} a $(\eps,k)$-outlier embedding of $X$. 
\end{definition}
Note that the multiplication of the distortion by $\diam(X)$ is to make the parameter $\eps$ scale-free. 
Since an isometry can be trivially achieved by removing all but one points; thus the above notion is well-defined for all $\eps> 0$.
We now state our main results concerning bi-criteria approximation: 

\emph{Bi-criteria outlier embeddings into ultrametrics:}
We obtain a polynomial-time algorithm which given an $n$-point metric space $\X$ and some $\eps>0$ such that $\X$ admits a $(\eps,k)$-outlier embedding into an ultrametric, outputs a $O(\eps \log n, 3k)$-outlier embedding into an ultrametric.

\emph{Bi-criteria outlier embeddings into $\mathbb{R}^d$:}
We present an algorithm which given an $n$-point metric space $\X$ and some $\eps>0$ such that $\X$ admits a $(\eps, k)$-\noembedding{} in $\mathbb{R}^d$, outputs a $(O(\sqrt{\eps}), 2k)$-\noembedding{} of $\X$ into $\mathbb{R}^d$. The algorithm runs in time $O(n^{d+3})$.

\emph{Bi-criteria outlier embeddings into trees:}
Finally we mention that one can easily derive a bi-criteria approximation for outlier embedding into trees by the work of Gromov on $\delta$-hyperbolicity \cite{gromov1987hyperbolic} (see also \cite{chepoi2008diameters}).
Formally, there exists a polynomial-time algorithm which given a metric space $\X$ and some $\eps>0$ such that $\X$ admits a $(\eps,k)$-outlier embedding into a tree, outputs a $(O(\eps \log n), 4k)$-outlier embedding into a tree.
Let us briefly outline the proof of this result:
 $\delta$-hyperbolicity is a four-point condition such that any $\delta$-hyperbolic space admits an embedding into a tree with $\ell_\infty$ distortion $O(\delta \log{n})$, and such an embedding can be computed in polynomial time.
Any metric that admits an embedding into a tree with $\ell_{\infty}$ distortion $\eps$ is $O(\eps)$-hyperbolic.
Thus by removing all 4-tuples of points that violate the $O(\eps)$-hyperbolicity condition and applying the embedding from \cite{gromov1987hyperbolic} we immediately obtain an $(O(\eps\log n), 4k)$-outlier embedding into a tree.
We omit the details.

\subsection{Previous work}
Over the recent years there has been a lot work on approximation algorithms for minimum distortion embeddings into several host spaces and under various notions of distortion.
Perhaps the most well-studied case is that of multiplicative distortion.
For this case, approximation algorithms and inapproximability results have been obtained for embedding into the line \cite{NR2015,badoiu2005approximation,badoiu2005low,fellows2009distortion}, constant-dimensional Euclidean space \cite{BadoiuCIS06,BergOS13,edmonds2010inapproximability,matouvsek2010inapproximability,badoiu2005approximation}, trees \cite{badoiu2007approximation,chepoi2012constant}, ultrametrics \cite{alon2008ordinal}, and other graph-induced metrics \cite{chepoi2012constant}.
We also mention that similar questions have been considered for the case of bijective embeddings \cite{papadimitriou2005complexity,hall2005approximating,kenyon2009low,edmonds2010inapproximability,khot2007hardness}.
Analogous questions have also been investigated for average \cite{dhamdhere2006approximation}, additive \cite{badoiu2003approximation}, $\ell_p$ \cite{ailon2005fitting}, and $\ell_{\infty}$ distortion \cite{farach1995robust,agarwala1998approximability}.

Similar in spirit with the outlier embeddings introduced in this work is the notion of \emph{embeddings with slack} \cite{chan2009metric,chan2006spanners,LammersenSS09}.
In this scenario we are given a parameter $\eps>0$ and we wish to find an embedding that preserves $(1-\eps)$-fraction of all pairwise distances up to a certain distortion.
We remark however that these mappings cannot in general be used to obtain outlier embeddings.
This is because typically in an embedding with slack the pairwise distances that are distorted arbitrarily involve a large fraction of all points.

\subsection{Discussion}
Our work naturally leads to several directions for further research.
Let us briefly discuss the most prominent ones.

An obvious direction is closing the gap between the approximation factors and the inapproximability results for embedding into ultrametrics and trees.
Similarly, it is important to understand whether the running time of the 2-approximation for embedding into Euclidean space can be improved.
More generally, an important direction is understanding the approximability of outlier embeddings into other host spaces, such as planar graphs and other graph-induced metrics.

In the context of bi-criteria outlier embeddings, another direction is to investigate different notions of distortion.
The case of $\ell_\infty$ distortion studied here is a natural starting point since it is very sensitive to outliers.
It seems promising to try to adapt existing approximation algorithms for $\ell_p$, multiplicative, and average distortion to the outlier case.

Finally, it is important to understand whether improved guarantees or matching hardness results for bi-criteria approximations are possible.

\section{Definitions}

A metric space is a pair $\X = (X, \rho)$ where $X$ is a set and $\rho: X \times X \to \mathbb{R}_{\ge 0}$ such that (i) for any $x, y \in X$, $\rho(x, y) = \rho(y, x)\ge 0$, (ii) $\rho(x, y) = 0$ if and only if $x = y$, and (iii) for any $x, y, z \in X$, $\rho(x, z) \le \rho(x, y) + \rho(y, z)$. 
Given two metric spaces $\X = (X, \rho_X)$ and $\Y = (Y, \rho_Y)$, an \emph{embedding of $\X$ into $\Y$} is simply a map $\phi: X \to Y$, and $\phi$ is an \emph{isometric embedding} if for any $x, x' \in X$, $\rho_X(x, x') = \rho_Y( \phi(x), \phi(x'))$. 

In this paper our input is an \emph{$n$-point metric} $(X, \rho)$, meaning that $X$ is a discrete set of cardinality $n$. 
Given an $n$-point metric space $\X = (X, \rho)$ and a value $\alpha \ge 0$, we denote by $\X + \alpha$ the metric space $(X, \rho')$ where for any $x\neq y\in X$ we have $\rho'(x,y)=\rho(x,y)+\alpha$. 


\begin{definition}[Ultrametric space]
A metric space $(X, \rho)$ is an \emph{ultrametric (tree) space} if and only if the following three-point condition holds for any $x, y, z \in X$: 
\begin{align}
\rho(x,y) \leq \max\{\rho(x,z), \rho(z,y)\}. \label{eq:ultrametric}
\end{align}
\end{definition}
\begin{definition}[Tree metric]\label{def:treemetric}
A metric space $(X, \rho)$ is a \emph{tree metric}
if and only if the following four-point condition holds
for any $x, y, z, w \in X$:
\begin{align}\label{eq:four-point}
\rho(x,y)+\rho(z,w) \leq \max\{ \rho(x,z)+\rho(y,w), \rho(x,w)+\rho(y,z)\}.
\end{align}
An equivalent formulation of the four-point condition is that for all $x, y, z, w \in X$, 
the largest two quantities of the following three terms are equal: 
\begin{align}\label{eq:treemetric2}
\rho(x,y)+\rho(z,w), ~~\rho(x,z)+\rho(y,w), ~~\rho(x,w)+\rho(y,z). 
\end{align}
\end{definition} 
In particular, an $n$-point tree metric $(X, \rho)$ can be realized by a weighted tree $T$ such that there is a map $\phi: X \to V(T)$ into the set of nodes $V(T)$ of $T$, and that for any $x, y \in X$, the shortest path distance $d_T(\phi(x), \phi(y))$ in $T$ equals $\rho(x,y)$. In other words, $\phi$ is an isometric embedding of $(X, \rho)$ into the graphic tree metric $(T, d_T)$. 
An ultrametric $(X, \rho)$ is in fact a special case of tree metric, where there is an isometric embedding $\phi: X \to V(T)$ to a rooted tree $(T, d_T)$ such that $\phi(X)$ are leaves of $T$ and all leaves are at equal distance from the root of $T$. 

\section{Approximation algorithms for outlier embeddings}
\label{sec:approx}

In this section we present approximation algorithms for the minimum outlier embedding problem for three types of target metric spaces: ultrametrics, tree metrics, and Euclidean metric spaces. We show in Appendix \ref{appendix:sec:hardness} that finding optimal solutions for each of these problems is NP-hard (and roughly speaking hard to approximate within a factor of $2$ as well). 
In the cases of ultrametric and tree metrics, it is easy to approximate the minimum outlier embedding within constant factor in $O(n^3)$ and $O(n^4)$ time, respectively. The key challenge (especially for embedding into tree metric) is to improve the time complexity of the approximation algorithm to $O(n^2)$, which is optimal. 

\subsection{Approximating outlier embeddings into ultrametrics}


\begin{theorem}
Given an $n$-point metric space $(X, \rho)$, there exists a $3$-approximation algorithm for minimum outlier embedding into ultrametrics, with running time $O(n^2)$.
\end{theorem}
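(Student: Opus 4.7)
The approach exploits the well-known characterization of ultrametrics via minimum spanning trees (MSTs): if $T$ is any MST of $(X,\rho)$ and $\hat\rho(x,y)$ denotes the maximum edge weight on the $T$-path between $x$ and $y$ (the sub-dominant ultrametric), then $\hat\rho\le\rho$ pointwise, and $(X,\rho)$ is an ultrametric if and only if $\hat\rho=\rho$. First I would compute $T$ using Prim's algorithm in $O(n^2)$ time, which already gives the bottleneck needed to recognize all potentially violating configurations without explicit enumeration of triples.

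The algorithm then processes the edges of $T$ in increasing order of weight in Kruskal fashion, maintaining the current cluster partition via union-find. When an edge $(u,v)$ of weight $w$ merges clusters $A\ni u$ and $B\ni v$, we run a \emph{uniformity check}: for each pair of live points $(a,b)\in A\times B$, verify whether $\rho(a,b)=w$. Because $(u,v)$ is the maximum edge on every $a$-to-$b$ path in $T$, we have $\hat\rho(a,b)=w$; combined with $\hat\rho\le\rho$, a violation necessarily has $\rho(a,b)>w$. Whenever such a bad cross pair $(a_0,b_0)$ is encountered, a short $O(1)$-time case analysis on the four triples inside $\{a_0,b_0,u,v\}$ (using $\rho(u,v)=w$, together with $\hat\rho(a_0,u),\hat\rho(b_0,v)\le w$ and $\hat\rho(a_0,v)=\hat\rho(u,b_0)=w$) identifies at least one triple that fails the three-point ultrametric condition under $\rho$. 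We insert the three points of this triple into the outlier set $K$, mark them dead, and continue.

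For the approximation guarantee, each triple added to $K$ is a genuine violating triple for $\rho$, and since its three points are marked dead immediately, the triples added across iterations are pairwise disjoint. Any feasible outlier set must contain at least one point from every violating triple, hence $|K|\le 3\cdot\OPT$. When the algorithm terminates, uniformity holds at every merge among live points, which is equivalent to $\hat\rho=\rho$ on $X\setminus K$, so the surviving metric is an ultrametric. For the running time, each ordered pair $(a,b)$ is tested at exactly one merge (the one where their clusters first join), so the uniformity checks contribute $\sum_{\text{merges}}|A|\cdot|B|=O(n^2)$; this dominates Prim's $O(n^2)$ MST construction and the near-linear union-find bookkeeping.

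\textbf{Main obstacle.} The principal technical point is the case analysis guaranteeing that every bad cross pair $(a_0,b_0)$ yields a violating triple among the $O(1)$ candidates in $\{a_0,b_0,u,v\}$. The subcases hinge on whether $\rho(a_0,u)$, $\rho(a_0,v)$, $\rho(b_0,u)$, $\rho(b_0,v)$ strictly exceed $w$ or are equal to $\hat\rho$ there, and one must check the three-point ultrametric condition triple by triple to see that at least one configuration fails. A secondary bookkeeping concern is ensuring that on-the-fly deletion of dead points does not violate the ``each ordered pair is tested at most once'' accounting that underlies the $O(n^2)$ bound.
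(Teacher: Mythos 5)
Your approach is genuinely different from the paper's. The paper processes points one at a time, maintaining a current subset $X_{i-1}$ that is an ultrametric; to test whether $x_i$ can be added it uses the nearest neighbor $x_i^*\in X_{i-1}$ and verifies, in $O(i)$ time, that $\rho(x_i,x_i^*)\le\max\{\rho(x_i,w),\rho(x_i^*,w)\}$ and $\rho(x_i,w)=\rho(x_i^*,w)$ for all $w\in X_{i-1}$; a failure at $w$ yields the violating triple $\{x_i,x_i^*,w\}$. Your proposal instead runs Kruskal on a fixed MST of the full input and checks uniformity of cross-pair distances at each merge.

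There is a genuine gap in your approach, and it is larger than the ``secondary bookkeeping concern'' you flag. The case analysis that extracts a violating triple from a bad cross pair $(a_0,b_0)$ relies on the bounds $\rho(a_0,u)\le w$ and $\rho(b_0,v)\le w$, not just on $\hat\rho(a_0,u),\hat\rho(b_0,v)\le w$ (the sub-dominant values you list). Those $\rho$-upper-bounds are exactly the uniformity guarantees, and they hold only if $u$ and $v$ were live when their intra-cluster merges occurred. But $u$ and $v$ are MST vertices, and nothing prevents one of them from having been marked dead (as a member of an earlier violating triple) before this merge is processed. Once $u$ is dead, $\rho(a_0,u)$ is unconstrained above, and there are assignments (e.g.\ $\rho(a_0,u)=\rho(a_0,v)=\rho(a_0,b_0)>w>\rho(b_0,u),\rho(b_0,v)$) where \emph{none} of the four candidate triples in $\{a_0,b_0,u,v\}$ violates the three-point condition. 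More fundamentally, uniformity against the sub-dominant $\hat\rho$ of the \emph{original} $X$ is strictly stronger than ``the surviving metric is an ultrametric'': after a hub vertex is deleted, the sub-dominant of the residual point set can be strictly larger than the restriction of $\hat\rho$, so a cross pair can have $\rho(a_0,b_0)>\hat\rho(a_0,b_0)=w$ even though the live points form a perfectly good ultrametric (take a star with center $u$, leaves at distance $1$, pairwise leaf distance $2$, and delete $u$). In such a configuration every cross pair is flagged as bad, yet there is no violating triple among the live points at all, so the algorithm either deletes points that need not be deleted (breaking the charging argument, since the deleted triple is not a genuine violation of $\rho$) or has no legal move. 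The paper's incremental formulation sidesteps all of this precisely because it never ties the test to an MST computed on points that may later disappear; each step re-derives a nearest-neighbor witness from the currently maintained (ultrametric) subset only.
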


\begin{proof}
We can obtain a polynomial-time $3$-approximation algorithm as follows:
For each triple of points $(x,y,z)\in X^3$, considered in some arbitrary order, check whether it satisfies \eqref{eq:ultrametric}.
If not, then remove $x$, $y$, and $z$ from $X$ and continue with the remaining triples. Let $\oset$ be the set of removed points. 
For every triple of points removed, at least one must be in any optimal solution; therefore the resulting solution $\oset$ is a $3$-approximation.
The running time of this method is $O(n^3)$. We next show how to improve the running time to $O(n^2)$.

Let $X=\{x_1,\ldots,x_n\}$.
We inductively compute a sequence $X_0,\ldots,X_n\subseteq X$, where $X_0$ is set to be $X_0= \emptyset$.
Given $X_{i-1}$ for some $i\in \{1,\ldots,n\}$, assuming the invariance that $X_{i-1}$ is an ultrametric, we compute $X_i$ as follows.
We check whether $(X_{i-1}\cup \{x_i\}, \rho)$ is an ultrametric.
If it is, then we set $X_i = X_{i-1}\cup \{x_i\}$.
Otherwise, there must exist $(x,y,z)\in (X_{i-1}\cup \{x_i\})^3$ that violates \eqref{eq:ultrametric}.
Since $(X_{i-1}, \rho)$ is an ultrametric, it follows that every such triple must contain $x_i$.
Therefore it suffices to show how to quickly find $y,z\in X_{i-1}$ such that $(x_i,y,z)$ violates \eqref{eq:ultrametric}, if they exist.
To this end, let $x_i^*$ be a nearest neighbor of $x_i$ in $X_{i-1}$, that is
$x_i^* = \argmin_{x\in X_{i-1}} \{\rho(x_i, x)\},
$
where we brake ties arbitrarily.
Instead of checking $x_i$ against all possible $y, z$ from $X_{i-1}$, we claim that \eqref{eq:ultrametric} holds for all $(x_i,y,z)$ with $y,z\in X_{i-1}$ if and only if for all $w\in X_{i-1}$ we have
\begin{align}\label{eq:ultra1}
\text{(i)}~~\rho(x_i,x_i^*) &\leq \max\{\rho(x_i,w), \rho(x_i^*,w)\} ~~\text{and}~~\text{(ii)}~~\rho(x_i,w) = \rho(x_i^*,w). 
\end{align}
Indeed, assume that (i) and (ii) above hold for all $w\in X_{i-1}$, yet there exist some $y, z \in X_{i-1}$ such that $(x_i, y, z)$ violates \eqref{eq:ultrametric}, say w.l.o.g., $\rho(x_i, y) > \max \{ \rho(x_i, z), \rho(y, z) \}$. 
Then by (ii) above, we have $\rho(x_i, y) = \rho(x_i^*, y)$ and $\rho(x_i, z) = \rho(x_i^*, z)$, implying that $\rho(x^*_i, y) > \max \{ \rho(x^*_i, z), \rho(y, z) \}$. Hence $(x_i^*, y, z)$ also violates \eqref{eq:ultrametric}, contradicting the fact that $(X_{i-1}, \rho)$ is an ultrametric. 
Hence no such $y, z$ can exist, and \eqref{eq:ultra1} is sufficient to check whether $X_{i-1} \cup \{x_i\}$ induces an ultrametric or not. 

Finally, we can clearly check in time $O(n)$ whether both conditions in \eqref{eq:ultra1} hold for all $w\in X_{i-1}$.
If either (i) or (ii) in \eqref{eq:ultra1} fails then $(x_i,x_i^*,w)$ violates \eqref{eq:ultrametric}, which concludes the proof.
\end{proof}

\subsection{Approximating outlier embeddings into trees}
\label{sec:approx:tree}

We now present a $4$-approximation algorithm for embedding a given $n$-point metric space into a tree metric with a minimum number of outliers. 
Using the four-point condition \eqref{eq:four-point} in Definition \ref{def:treemetric}, 
it is fairly simple to obtain a $4$-approximation algorithm for the problem with running time $O(n^4)$ as follows:
Check all 4-tuples of points $x,y,z,w\in X$.
If the 4-tuple violates the four-point condition, then remove $x,y,z,w$ from $X$.
It is immediate that for any such 4-tuple, at least one of its points much be an outlier in any optimal solution.
It follows that the result is a $4$-approximation.

We next show how to implement this approach in time $O(n^2)$. 
The main technical difficult is in finding a set of violating 4-tuples quickly. 
The high-level description of the algorithm is rather simple, and is as follows.
Let $(X,\rho)$ be the input metric space where $X=\{x_1,\ldots,x_n\}$.
Set $X_1=\{x_1\}$.
For any $i=2,\ldots,n$, we inductively define $X_i\subseteq X$. At the beginning of the $i$-th iteration, we maintain the invariance that $(X_{i-1}, \rho)$ is a tree metric. 
If $(X_{i-1}\cup \{x_i\}, \rho)$ is a tree metric, then we set $X_i=X_{i-1}\cup \{x_i\}$.
Otherwise there must exist $y_i,z_i,w_i\in X_{i-1}$ such that the 4-tuple $(x_i,y_i,z_i,w_i)$ violates the four-point condition; we set $X_i=X_{i-1} \setminus \{y_i,z_i,w_i\}$. 

To implement this idea in $O(n^2)$ time, it suffices to show that for any $i=2,\ldots,n$, given $X_{i-1}$, we can compute $X_i$ in time $O(n)$. 
The algorithm will inductively compute a collection of edge-weighted trees $T_1,\ldots,T_n$, with $T_1$ simply being the graph with $V(T_1) = \{x_1\}$, and maintain the following invariants for each $i\in [1,n]$: 
\begin{description}\denselist
\item[(I-1)] $X_i \subseteq V(T_i)$ and all leaves of $T_i$ are in $X_i$. $(X_i, \rho)$ embeds isometrically into $(T_i, d_{T_i})$; that is, the shortest-path metric of $T_i$ agrees with $\rho$ on $X_i$: for any $x,y\in X_i$, $d_{T_i}(x,y)=\rho(x,y)$.  
\item[(I-2)] At the $i$-th iteration either $X_i = X_{i-1} \cup \{x_i\}$ or $X_i = X_{i-1} \setminus \{y,z,w\}$ where the 4-tuple $\{x_i, y,z,w\}$ violates the four-point condition under metric $\rho$. 
\end{description}

\begin{definition}[Leaf augmentation]\label{def:leafaug}
Given $X'\subseteq X$, let $T$ be a tree with $V(T) \cap X =X'$.
Given $a\in X\setminus X'$ and $u, v\in X'$, 
the \emph{$a$-leaf augmentation of $T$ at $\{u, v\}$} is the tree $T'$ obtained as follows.
Let $P$ be the path in $T$ between $u$ and $v$ (which may contain a single vertex if $u=v$).
Set $r = (\rho(a, u) + \rho(a,v) - \rho(u,v))/2$.
Let $\stemv$ be a vertex in $P$ with $d_P(\stemv,u)=\rho(a, u) - r$; if no such vertex exists then we introduce a new such vertex $\stemv$ by subdividing the appropriate edge in $P$ and update the edge lengths accordingly.
In the resulting tree $T'$ we add the vertex $a$ and the edge $\{a,\stemv \}$ if they do no already exist, and we set the length of $\{a,\stemv\}$ to be $r$.
We call $\stemv$ the \emph{stem} of $a$ (w.r.t.~the leaf augmentation).
When $u=v$ we say that $T'$ is the $a$-leaf augmentation of $T$ at $u$, in which case $T'$ is obtained from $T$ simply by adding $a$ as a leaf attached to $u$, and $u$ is the stem of $a$.
\end{definition}

In what follows, we set $x_i^*$ to be the nearest neighbor of $x_i$ in $X_{i-1}$, that is
\[
x_i^* =\argmin_{x\in X_{i-1}} \rho(x, x_i),
\]
where we break ties arbitrarily. 
Intuitively, if we can build a new tree $T'$ from $T_{i-1}$ so that $(X_{i-1} \cup \{x_i\}, \rho)$ can be isometrically embedded in $T'$, then $T'$ is a $x_i$-leaf augmentation of $T_{i-1}$ at some pair $\{x_i^*, u \}$. 
Our approach will first compute an auxiliary  structure, called $x_i$-orientation on $T_{i-1}$, to help us identify a potential leaf augmentation. We next check for the validity of this leaf augmentation. The key is to produce this candidate leaf augmentation such that if it is not valid, then we will be able to find a 4-tuple violating the four-point condition from it quickly. 
\begin{definition}[$(a,u,v)$-orientation]
Let $X'\subseteq X$ and let $T$ be a tree with $V(T)\cap X=X'$.
Let $a\in X \setminus X'$ and $u,v\in X'$.
The \emph{$(a,u,v)$-orientation} of $T$ is a partially oriented tree $\vec{T}$ obtained as follows:
Let $T'$ be the $a$-leaf augmentation of $T$ at $\{u,v\}$, and let $a'$ be the stem of $a$.
We orient every edge in $P_{uv}\cap E(T')$ towards $a'$, where $P_{uv}$ is the unique path in $T$ between $u$ and $v$. 
All other edges in $E(T) \setminus P_{uv}$ remain unoriented. 

If $E(T)\neq E(T')$ then there exists a unique edge in $e\in P_{uv} \setminus E(T')$ (which is subdivided in $T'$);
this edge $e$ remains undirected in $\vec{T}$. We call this edge the \emph{sink edge w.r.t. $\{u, v\}$}. 
If there is no sink edge, then there is a unique vertex in $P_{uv}$ with no outgoing edges in $P_{uv}$, which we call the \emph{sink vertex w.r.t.~$\{u, v\}$.} 
Note that the sink is the simplex of smallest dimension that contains the stem of $a$ w.r.to the leaf augmentation at $\{u, v\}$. 
\end{definition}

\begin{wrapfigure}{r}{0.2\textwidth}
\vspace*{-0.3in}\fbox{\includegraphics[height=3cm]{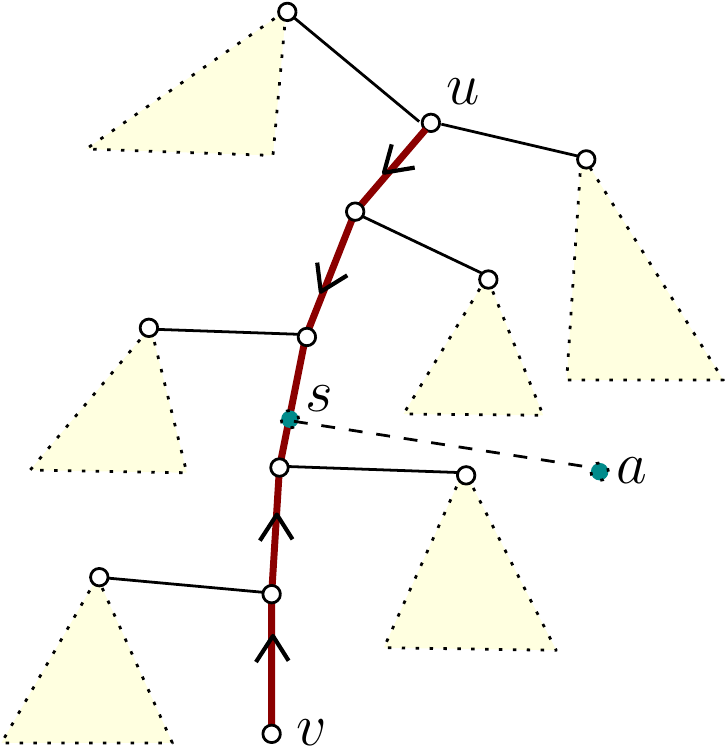}}
\end{wrapfigure}
See the right figure for an example:  where $\stemv$ is stem of $a$ in the leaf augmentation at $\{u, v\}$. The thick path $P_{uv}$ is oriented, other than the sink edge (the one that contains the stem $\stemv$). 

\begin{definition}[$x_i$-orientation]\label{def:globalorientation}
An \emph{$x_i$-orientation of $T_{i-1}$} is any partial orientation $\vec{T}_{i-1}$ of $T_{i-1}$ obtained via the following procedure: 
Consider any ordering of $X_{i-1}$, say $\{v_1, v_2, \ldots, v_\ell\} = X_{i-1}$. 
Start with $\vec{T}_{i-1} = T_{i-1}$,
i.e. all edges in $\vec{T}_{i-1}$ are initialized as undirected and we will iteratively modify their orientation. 
Process vertices $v_1, \ldots, v_{\ell}$ in this order. 
For each $v_j$, denote by $P_j$ the path in $T_{i-1}$ between $v_j$ and $x_i^*$. 
Traverse $P_j$ starting from $v_j$ until we reach either $x_i^*$ or an edge which is already visited. For each unoriented edge we visit, we set its orientation to be the one in the $(x_i, v_j, x_i^*)$-orientation of $T_{i-1}$. 
An edge that is visited in the above process is called \emph{masked}. 
\end{definition} 

Since the above procedure is performed for all leaves of $T_{i-1}$, an $x_i$-orientation will mask all edges. 
However, a masked edge may not be oriented, in which case this edge must be the sink edge w.r.t.~$\{v_j, x_i^*\}$ for some $v_j \in X_{i-1}$. 

\begin{definition}[Sinks]
Given an $x_i$-orientation $\vec{T}$ of tree $T$, a \emph{sink} is either an un-oriented edge, or a vertex $v \in V(T)$ such that all incident edges have an orientation toward $v$. The former is also called a \emph{sink edge w.r.t. $\vec{T}$} and the latter a \emph{sink vertex w.r.t. $\vec{T}$}. 
\end{definition}
It can be shown that each sink edge/vertex must be a sink edge/vertex w.r.t.~$\{v_j, x_i^*\}$ for some $v_j \in X_{i-1}$, and we call $v_j$ a \emph{generating vertex} for this sink. 

\begin{wrapfigure}{r}{0.15\textwidth}
\vspace*{-0.1in}\fbox{\includegraphics[height=3cm]{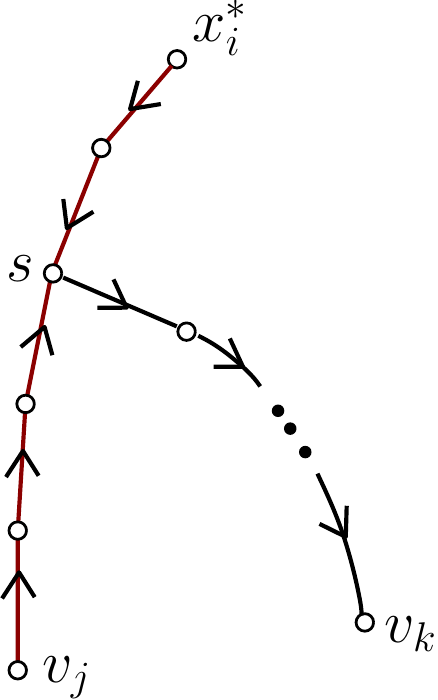}}
\end{wrapfigure}
An $x_i$-orientation may have multiple sinks. We further augment the $x_i$-orientation to record a generating vertex $v_j$ for every sink (there may be multiple choices of $v_j$ for a single sink, and we can take an arbitrary one). 
We also remark that a sink w.r.t. some $\{v_j, x_i^*\}$ may not ultimately be a sink for the global $x_i$-orientation: see the right figure for an example, where $s$ is a sink vertex w.r.t. $\{v_j, x_i^*\}$, but not a sink vertex for the global $x_i$-orientation. 

The proofs of the following two results can be found in Appendix \ref{appendix:sec:approx}. 
\begin{lemma}\label{lem:comp-orientation}
An $x_i$-orientation of $T_{i-1}$ (together with a generating vertex for each sink) can be computed in $O(i)$ time. 
\end{lemma}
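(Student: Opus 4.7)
My plan is an amortized linear-time implementation, resting on two facts: (a) by invariant (I-1) every leaf of $T_{i-1}$ lies in $X_{i-1}$, so $|V(T_{i-1})|,|E(T_{i-1})|=O(|X_{i-1}|)=O(i)$; and (b) in the definition of the procedure each edge is newly masked by at most one $v_j$, because the walk halts at an already-masked edge in $O(1)$.

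I would first preprocess $T_{i-1}$ in $O(i)$ time: root it at $x_i^*$ and run one DFS that stores, for each $v\in V(T_{i-1})$, a parent pointer together with the depth $d_{T_{i-1}}(v,x_i^*)$. Then the path $P_j$ from any $v_j\in X_{i-1}$ toward $x_i^*$ is walked one parent-edge at a time, and at any step the cumulative distance $d_{T_{i-1}}(v_j,u)=d_{T_{i-1}}(v_j,x_i^*)-d_{T_{i-1}}(u,x_i^*)$ is available in $O(1)$.

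Next, iterate through the prescribed ordering $v_1,\ldots,v_\ell$ of $X_{i-1}$. For each $v_j$, compute in $O(1)$ the stem offset
\[
r_j=\tfrac{1}{2}\bigl(\rho(x_i,v_j)+\rho(v_j,x_i^*)-\rho(x_i,x_i^*)\bigr),
\]
which equals the distance from $v_j$ along $P_j$ to the stem of the $x_i$-leaf augmentation at $\{v_j,x_i^*\}$. Then walk from $v_j$ along parent pointers; for every encountered edge that is still unmasked, mark it masked and, comparing the cumulative offsets at its endpoints to $r_j$, set its orientation exactly as the $(x_i,v_j,x_i^*)$-orientation prescribes---toward $x_i^*$ if the edge lies strictly between $v_j$ and the stem, toward $v_j$ if strictly between the stem and $x_i^*$, and leave unoriented (flagged as $v_j$'s local sink edge) if the stem lies in its interior; if the stem coincides with a vertex along the walk, flag that vertex as $v_j$'s local sink vertex. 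In each local-sink case, append the pair (sink, $v_j$) to a table. The walk halts in $O(1)$ upon reaching $x_i^*$ or the first already-masked edge.

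The work at $v_j$ is $O(1)$ plus $O(1)$ for each edge newly masked during its walk, so the total running time telescopes to $O(|E(T_{i-1})|+|X_{i-1}|)=O(i)$. A final $O(i)$ sweep enumerates global sinks of $\vec{T}_{i-1}$ (unoriented edges and vertices all of whose incident edges point inward) and for each one retrieves a generating vertex from the table. The main obstacle is to confirm that the table covers every global sink, i.e., that for each global sink there is a $v_j$ whose walk actually reached the stem rather than halting early: this follows from the structural fact stated after the definition of sinks that every global sink is a local sink for some $v_j\in X_{i-1}$, combined with the observation that among all such $v_j$ the earliest in the processing order encounters no previously masked edge between itself and the stem, so its walk reaches the stem and the pair is recorded.
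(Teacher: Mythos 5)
Your proof follows essentially the same approach as the paper's: compute the stem offset $r_j=\frac{1}{2}(\rho(x_i,v_j)+\rho(v_j,x_i^*)-\rho(x_i,x_i^*))$ in $O(1)$ per leaf, walk each $v_j$ toward the root $x_i^*$ while orienting and masking edges and halting at the first already-masked edge, and charge the work to edges so the total is amortized $O(|V(T_{i-1})|)=O(i)$. The only cosmetic differences are that you add an explicit preprocessing DFS and a final sweep to harvest generating vertices from a table, where the paper records them inline; the ``main obstacle'' you flag at the end (that every global sink has a recorded generating vertex) is likewise left implicit in the paper, and your justification of it (that the earliest relevant $v_j$ meets no masked edge before its stem) is asserted rather than proved, so on that point you match the paper's level of detail.
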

\begin{lemma}\label{lem:sinkexists}
Any $x_i$-orientation $\vec{T}_{i-1}$ of $T_{i-1}$ has at least one sink. 
\end{lemma}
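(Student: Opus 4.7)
The plan is to first argue that every edge of $T_{i-1}$ is masked in any $x_i$-orientation, and then extract a sink via a standard finite-DAG argument.

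First I would justify the remark (just after Definition \ref{def:globalorientation}) that every edge of $T_{i-1}$ is masked. Fix an edge $e$, and let $A$ be the component of $T_{i-1}\setminus\{e\}$ not containing $x_i^*$. Invariant (I-1) ensures that every leaf of $T_{i-1}$ lies in $X_{i-1}$, so $A$ contains some leaf $\ell\in X_{i-1}$, and $e$ lies on the unique path $P_{\ell,x_i^*}$. Suppose for contradiction that $e$ is never masked. When $\ell$ is processed, its traversal must halt at some already-masked edge $e_1$ that lies strictly between $\ell$ and $e$ on $P_{\ell,x_i^*}$. Let $v_{j_1}$ be the vertex whose processing first masked $e_1$. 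The key structural observation is that $P_{v_{j_1},x_i^*}$ and $P_{\ell,x_i^*}$ share their entire subpath from $e_1$ onward to $x_i^*$: since both $v_{j_1}$ and $\ell$ lie in the non-$x_i^*$-component of $T_{i-1}\setminus\{e_1\}$, both paths cross $e_1$ into the same endpoint of $e_1$, after which they must follow the unique tree path to $x_i^*$. Hence $P_{v_{j_1},x_i^*}$ also contains $e$. Since $e$ is unmasked at time $j_1$, the walk from $v_{j_1}$ either reaches $e$ (contradiction) or is blocked by yet another earlier-masked edge $e_2$ strictly between $e_1$ and $e$ on $P_{v_{j_1},x_i^*}$. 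Iterating produces a strictly decreasing chain of processing indices; the earliest vertex in this chain has no masked edges blocking its walk, so its walk reaches $e$ and masks it, a contradiction.

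With every edge masked, each edge is, by definition, either oriented or an unoriented sink edge. If some sink edge exists we are done, so assume every edge is oriented. This yields a digraph on $V(T_{i-1})$ whose underlying graph is the tree $T_{i-1}$; since the tree is acyclic, this digraph has no directed cycle and is therefore a finite DAG. Every finite DAG contains a vertex with no outgoing edges, which is precisely a sink vertex in our sense. Thus a sink (edge or vertex) always exists.

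The main obstacle is the first step. A single leaf's walk can be halted prematurely by edges masked in earlier iterations, so one must chase back through the processing order to find a vertex whose walk actually reaches $e$. The structural ingredient that makes the traceback work is the observation that once two leaf-to-$x_i^*$ paths share a single masked edge $e_1$, their tails to $x_i^*$ must coincide on the unique subpath in the $x_i^*$-component of $T_{i-1}\setminus\{e_1\}$. Once the masking claim is in hand, the sink extraction is a one-line DAG argument.
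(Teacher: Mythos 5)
Your proof is correct, and it takes a genuinely different route from the paper's. The paper argues constructively: after $v_1$ is processed there is a sink with respect to $\{v_1,x_i^*\}$; a sink edge persists to the end, and from a sink vertex $w_1$ that is not a global sink one repeatedly follows outgoing edges. Since at every vertex the unique edge toward $x_i^*$ is the one just traversed (hence pointing toward the current vertex), this walk moves strictly away from $x_i^*$; it cannot escape through a leaf, so it must stop at a sink of $\vec{T}_{i-1}$. Your argument instead proves the remark after Definition~\ref{def:globalorientation} that every edge gets masked, via a traceback through the processing order that exploits the tree fact that two leaf-to-$x_i^*$ paths sharing an edge share the entire tail to $x_i^*$, and then extracts a sink from a fully-oriented tree via the terminal-vertex property of a finite DAG. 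Both routes are sound; the DAG extraction is arguably easier to verify, while the paper's path-following fits its constructive development. One small observation: your Step 1 is not strictly needed for this lemma, because the paper defines a sink edge to be simply any unoriented edge (masked or not), so you could pass directly to the dichotomy ``some edge is unoriented, hence a sink edge exists'' versus ``all edges are oriented, hence the tree is a finite DAG and has a terminal vertex.'' That said, Step 1 supplies a proof of the paper's otherwise-unproved claim that all edges are masked, which is worth having in hand.
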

%
%
\begin{figure}[htbp]
\begin{center}
\begin{tabular}{ccccc}
\fbox{\includegraphics[height=3.3cm]{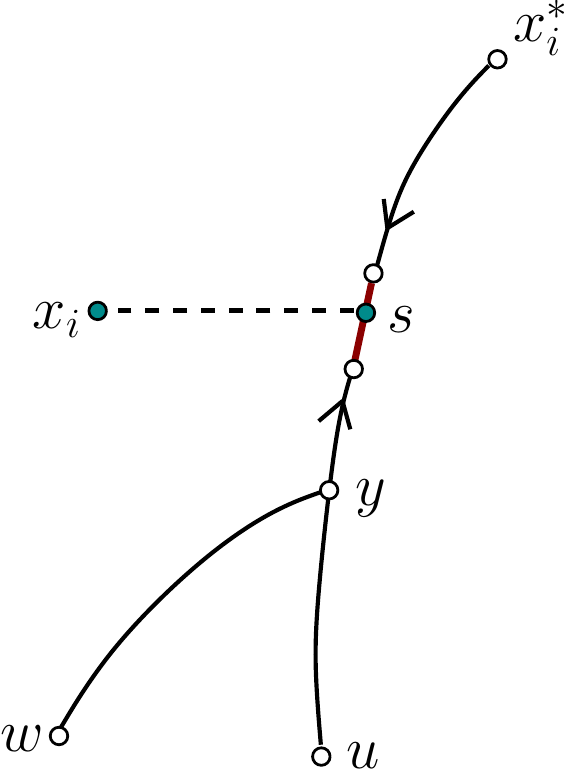}} &\hspace*{0.2in} &\fbox{\includegraphics[height=3.3cm]{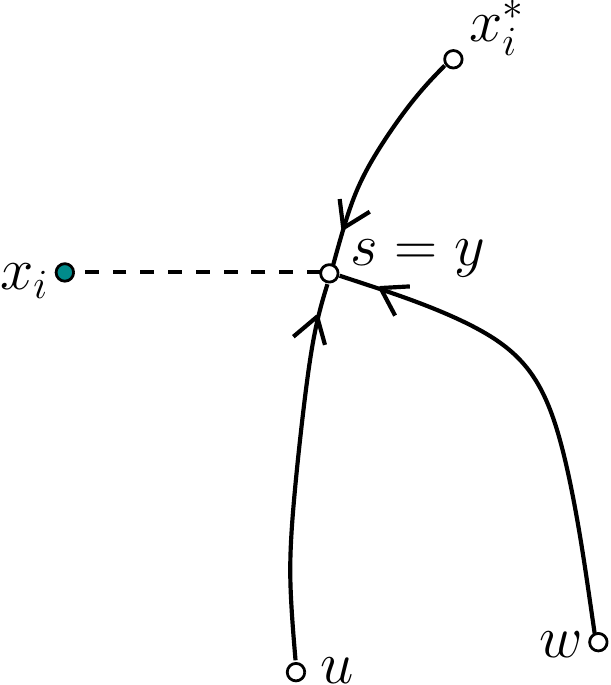}} & \hspace*{0.2in}& \fbox{\includegraphics[height=3.3cm]{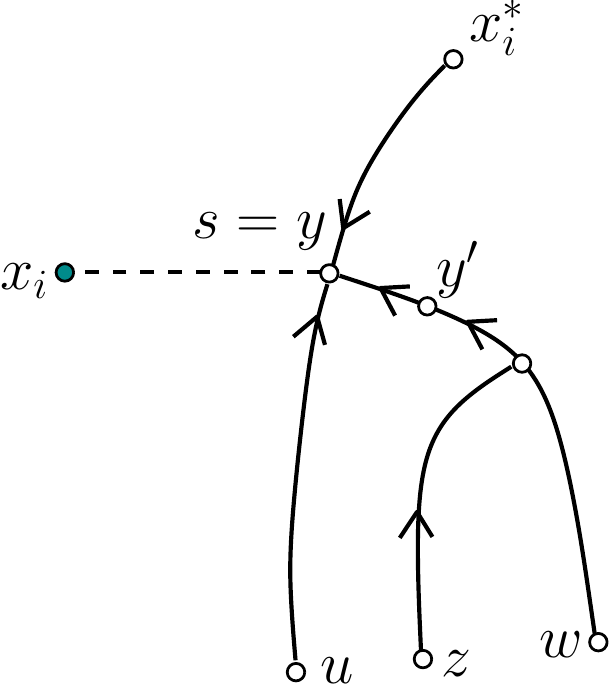}}\\
(a) & & (b) & & (c) 
\end{tabular}
\end{center}
\vspace*{-0.25in}
\caption{{\small (a) An illustration for Case 1 in Lemma \ref{lem:T_i}. Solid edges are from $T_{i-1}$ and the dotted edge connecting $x_i$ to its stem $s$ is in $L$. (b) Case 2 when $y = s$. (c) If $\rho(x_i,w) > d_L(x_i,w)$, then we can find $z$ the processing of which during the construction of $\vec{T}_{i-1}$ assigns the orientation of edge $(y,y')$. 
}
\label{fig:lem:T_i}}
\end{figure}
\begin{lemma}\label{lem:T_i}
For any $i\in \{2,\ldots,n\}$, given $T_{i-1}$, we can compute $T_i$ and $X_i$ satisfying invariants {\sf (I-1)} and {\sf (I-2)} in $O(i$) time.
\end{lemma}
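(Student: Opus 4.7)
The plan is to compute $T_i$ by identifying a single candidate leaf augmentation via the $x_i$-orientation and verifying it; if verification fails, the discrepancy together with the orientation history yields a violating $4$-tuple. First I would compute $x_i^*$ and invoke Lemma~\ref{lem:comp-orientation} to build an $x_i$-orientation $\vec{T}_{i-1}$ of $T_{i-1}$, with a generating vertex recorded for each sink, in $O(i)$ time. By Lemma~\ref{lem:sinkexists} there is at least one sink; pick any sink $\asink$ and let $v \in X_{i-1}$ be its generating vertex, so $\asink$ is the sink w.r.t.\ $\{v, x_i^*\}$.

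I would then construct the candidate tree $L$ as the $x_i$-leaf augmentation of $T_{i-1}$ at $\{v, x_i^*\}$. By Definition~\ref{def:leafaug} this takes $O(i)$ time and guarantees $d_L(x_i,v) = \rho(x_i,v)$, $d_L(x_i, x_i^*) = \rho(x_i, x_i^*)$, and $d_L = d_{T_{i-1}}$ on pairs from $X_{i-1}$. A single BFS from $x_i$ in $L$ yields $d_L(x_i,w)$ for every $w \in X_{i-1}$ in $O(i)$ time; I compare each to $\rho(x_i,w)$. If equality holds throughout, then $(X_{i-1} \cup \{x_i\}, \rho)$ embeds isometrically into $L$, and I set $T_i = L$ and $X_i = X_{i-1} \cup \{x_i\}$; invariants {\sf (I-1)} and the first clause of {\sf (I-2)} follow directly.

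The technical core is the failure case, where some $w \in X_{i-1}$ satisfies $\rho(x_i,w) \neq d_L(x_i,w)$. I would then extract $y, z \in X_{i-1}$ in $O(i)$ time such that $\{x_i, y, z, w\}$ violates~\eqref{eq:four-point}. The intended $y$ is (essentially) the attachment point of $w$ on the $v$-to-$x_i^*$ path $P$ in $T_{i-1}$, pulled down to a nearest leaf in $X_{i-1}$ if the attachment point itself is an internal Steiner vertex; Figure~\ref{fig:lem:T_i}(a) depicts the subcase $y \neq \stemv$ (Case~1) and Figure~\ref{fig:lem:T_i}(b) the subcase $y = \stemv$ (Case~2). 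The point $z$ is supplied by the orientation history: the edge of $P$ adjacent to $y$ on the $\asink$-side was oriented while processing some $z \in X_{i-1}$ (Figure~\ref{fig:lem:T_i}(c)), and the definition of the $(x_i, z, x_i^*)$-orientation encodes which of the three sums in~\eqref{eq:treemetric2} is the smallest on $\{x_i, z, y, \cdot\}$. Combining this ordering with the identities $d_L(x_i,v) = \rho(x_i,v)$ and $d_L = d_{T_{i-1}}$ on $X_{i-1}$, and the strict discrepancy $\rho(x_i,w) \neq d_L(x_i,w)$, one concludes that the three sums in~\eqref{eq:treemetric2} evaluated at $\{x_i,y,z,w\}$ cannot have their two largest equal, so the four-point condition fails. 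Finally I set $X_i = X_{i-1} \setminus \{y,z,w\}$ and obtain $T_i$ by deleting these leaves from $T_{i-1}$ and suppressing any resulting degree-$2$ vertices (summing the two incident edge weights); this preserves shortest-path distances on $X_i$, so {\sf (I-1)} and the second clause of {\sf (I-2)} hold. The main obstacle is exactly this failure-case analysis: one must argue that the witness $z$ is always produced by the orientation procedure (so it is a well-defined element of $X_{i-1}$) and that the qualitative information it carries, together with the sign of $\rho(x_i,w) - d_L(x_i,w)$, always suffices to force a four-point violation in both Case~1 and Case~2.
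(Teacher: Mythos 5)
Your algorithmic skeleton matches the paper's (orientation, pick a sink, build the leaf augmentation $L$ at $\{u, x_i^*\}$, verify distances from $x_i$ in $O(i)$ time), and you correctly identify that the failure case is the technical core. However, the 4-tuple you propose to remove is wrong, and this is not a small detail — it is the whole difficulty.

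You announce the violating 4-tuple as $\{x_i, y, z, w\}$, where $y$ is the branch point of $w$ off the path $P_{u x_i^*}$ (the LCA of $u$ and $w$ in $L$ rooted at $x_i^*$), ``pulled down to a nearest leaf'' if it is a Steiner vertex. But in the paper's argument $y$ is only an \emph{auxiliary} vertex used to distinguish cases; it is never a member of the removed 4-tuple. The 4-tuple is always of the form $\{x_i, x_i^*, u, w\}$, or in the hard sub-case $\{x_i, x_i^*, z, w\}$ or $\{x_i, x_i^*, z, u\}$. The reason $x_i^*$ and $u$ (the pair at which the augmentation is performed) must be in the 4-tuple is that these are precisely the two points for which $d_L(x_i, \cdot) = \rho(x_i, \cdot)$ holds \emph{by construction}. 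Together with $d_L = \rho$ on $X_{i-1}\times X_{i-1}$, this leaves exactly \emph{one} unconstrained quantity in the three sums of \eqref{eq:treemetric2}, namely $\rho(x_i,w)$, and a single discrepancy $\rho(x_i,w) \neq d_L(x_i,w)$ can then be shown to break the two-largest-equal property. If you instead take some other leaf $y$ near the branch point, you have no control over $\rho(x_i, y)$ (and also over $\rho(x_i, z)$), so a single discrepancy on $(x_i,w)$ does not force a violation of \eqref{eq:treemetric2} on $\{x_i,y,z,w\}$; the argument collapses.

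A second, related gap: you extract $z$ from the orientation history uniformly, but $z$ is only needed in one sub-case. When $y \neq \stemv$ (Case 1), the strict inequality $d_L(x_i,x_i^*)+d_L(u,w) < d_L(x_i,u)+d_L(x_i^*,w) = d_L(x_i,w)+d_L(x_i^*,u)$ already shows $\{x_i,x_i^*,u,w\}$ violates \eqref{eq:treemetric2} once $\rho(x_i,w)\neq d_L(x_i,w)$ is substituted — no $z$ required. When $y = \stemv$ (Case 2), all three $d_L$-sums are equal; if $\rho(x_i,w) > d_L(x_i,w)$ the same 4-tuple still violates, but if $\rho(x_i,w) < d_L(x_i,w)$ the 4-tuple $\{x_i,x_i^*,u,w\}$ actually \emph{satisfies} the four-point condition (the top two sums remain tied). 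This is exactly where the orientation history is invoked: the stem of $x_i$ at $\{w,x_i^*\}$ is shown to lie inside $P_{wy}$, yet the edge $(y,y')\subset P_{wy}$ was oriented toward $y$ by some earlier $z$, and one then splits further on whether $d_L(x_i,z) = \rho(x_i,z)$ (reducing to Case 1 for $\{x_i,x_i^*,z,w\}$) or $\rho(x_i,z) > d_L(x_i,z)$ (giving the Case 2 ``$>$'' configuration for $\{x_i,x_i^*,z,u\}$). Your proposal neither isolates the dangerous sub-case ($y=\stemv$ with $\rho < d_L$) nor performs the final split on $z$, and your closing admission that ``one must argue that the witness $z$ is always produced...always suffices'' is, in effect, a statement of the missing proof rather than a proof.
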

\begin{proof}
It suffices to show that in $O(i)$ time we can either find a $4$-tuple of points in $X_{i-1}\cup \{x_i\}$, that violates the four-point condition, or we can compute a tree having a shortest-path metric that agrees with $\rho$ on $X_{i-1}\cup \{x_i\}$. 
By Lemma \ref{lem:comp-orientation}, we can compute an $x$-orientation $\vec T_{i-1}$ of $T_{i-1}$ in $O(i)$ time. 
Consider any sink of $\vec T_{i-1}$ (whose existence is guaranteed by Lemma \ref{lem:sinkexists}), and let $u$ be its associated generating vertex; 
$u$ must be in $X_{i-1}$. 
Let $L$ be the $x_i$-leaf augmentation of $T_{i-1}$ at $\{x_i^*, u\}$, and let $d_L$ denote the shortest path metric on the tree $L$. 

Since $L$ is the $x_i$-leaf augmentation of $T$, we have for all $w, w' \in X_{i-1}$, $d_L(w, w') = d_{T_{i-1}}(w,w') = \rho(w, w')$ (the last quality is because $(X_{i-1}, \rho)$ embeds isometrically into $T_{i-1}$). 
Thus $d_L$ may only disagree with $\rho$ on pairs of points $x_i, v$, for some $v \in X_{i-1}$. 
We check in $O(i)$ time if, for all $v\in X_{i-1}$, we have $d_L(x_i, v) = \rho(x_i, v)$ via a traveral of $L$ starting from the stem of $x_i$ in $L$. 
If the above holds, then obviously $(X_{i-1}\cup \{x_i\}, \rho)$ embeds isometrically into $L$. We then set $X_i = X_{i-1} \cup \{x_i\}$ and output $T_i = L$. 
Otherwise, let $w \in X_{i-1}$ be such that $d_L(x_i, w) \neq \rho(x_i, w)$. 
We now show that we can find a $4$-tuple including $x_i$ that violates the four-point condition in constant time. 

Let $\stemv$ be the stem of $x_i$ in $L$. Consider $L$ as rooted at $x_i^*$ and let $y \in V(L)$ be the lowest common ancester of $u$ and $w$. Note that $y$ must be a vertex from $V(T_{i-1})$ too. 
Let $P_{vv'}$ denote the unique path in $L$ between any two $v$ and $v'$. 
The vertex $y$ must be in the path $P_{u x_i^*}$. 
\begin{description}\denselist
\item{Case 1: } $y \neq \stemv$.
In this case, $y$ is either in the interior of path $P_{\stemv u}$ or of path $P_{\stemv x_i^*}$. Assume w.o.l.g. that $y$ is in the interior of $P_{\stemv u}$; the handling of the other case is completely symmetric. See Figure \ref{fig:lem:T_i} (a) for an illustration. 
Since $d_L$ is a tree metric, we know that the $4$-tuple $\{x_i, x_i^*, u, w \}$ should satisfy the four-point condition under the metric $d_L$. 
Using the alternative formulation of four-point condition in Definition \ref{def:treemetric}, we have that the largest two quantities of the following three terms should be equal: 
\begin{align}\label{eq:threeterms}
 d_L(x_i, x_i^*) + d_L(u, w), ~~d_L(x_i, u) + d_L(x_i^*, w), ~~d_L(x_i, w) + d_L(x_i^*, u). 
\end{align}
For this specific configuration of $y$, we further have: 
\begin{align}\label{eq:dL}
d_L(x_i, x_i^*) + d_L(u, w) &< d_L(x_i, u) + d_L(x_i^*, w) = d_L(x_i, w) + d_L(x_i^*, u).
\end{align}
On the other hand, by construction, we know that $d_L$ agrees with $\rho$ on $X_{i-1}$. 
Furthermore, since $L$ is the $x_i$-leaf augmentation of $T_{i-1}$ at $\{u, x_i^*\}$, we have that $d_L(x_i, u) = \rho(x_i, u)$ and $d_L(x_i, x_i^*) = \rho(x_i, x_i^*)$. 
Hence \eqref{eq:dL} can be rewritten as 
\begin{align}\label{eq:dL2}
 \rho(x_i, x_i^*) + \rho(u, w) & < \rho(x_i, u) + \rho(x_i^*, w) = d_L(x_i, w) + \rho(x_i^*, u). 
\end{align}
If $\rho(x_i, w) \neq d_L(x_i, w)$, then the largest two quantities of 
\[\rho(x_i, x_i^*) + \rho(u, w), \rho(x_i, u) + \rho(x_i^*, w), \rho(x_i, w) + \rho(x_i^*, u)
\]
can no longer be equal as $\rho(x_i, x_i^*) + \rho(u, w) < \rho(x_i, u) + \rho(x_i^*, w)$. Hence the $4$-tuple $\{x_i, x_i^*, u, w \}$ violates the four-point condition under the metric $\rho$ (by using \eqref{eq:treemetric2}). 
\item{Case 2: $y = \stemv$}, in which case $\stemv$ must be a sink vertex: see Figure \ref{fig:lem:T_i} (b) for an illustration. 
For this configuration of $y$, it is necessary that 
\begin{align}\label{eq:dL3}
 \rho(x_i, x_i^*) + \rho(u, w) & = \rho(x_i, u) + \rho(x_i^*, w) = d_L(x_i, w) + \rho(x_i^*, u). 
\end{align}
Hence if $\rho(x_i, w) > d_L(x_i, w)$, then the $4$-tuple $\{x_i, x_i^*, u, w \}$ violates the four-point condition under the metric $\rho$ because 
\[ \rho(x_i, x_i^*) + \rho(u, w) = \rho(x_i, u) + \rho(x_i^*, w) < \rho(x_i, w) + \rho(x_i^*, u). 
\]
What remains is to find an violating 4-tuple for the case when $\rho(x_i, w) < d_L(x_i, w)$. 

Now imagine performing the $x_i$-leaf augmentation of $T_{i-1}$ at $\{w, x_i^*\}$. We first argue that the stem $\stemv'$ of $x_i$ w.r.t.~$\{w, x_i^*\}$ necessarily lies in $P_{wy}$ in $T_{i-1}$. 
Let $r_w = (\rho(x_i, x_i^*) + \rho(x_i, w) - \rho(x_i^*, w))/2$. 
In the augmented tree $L$, $d_L(x_i, y) = (\rho(x_i, x_i^*)+\rho(x_i, u) - \rho(x_i^*, u))/2$. 
Combing \eqref{eq:dL3} and $\rho(x_i, w) < d_L(x_i, w)$ we have that $r_w < d_L(x_i, y)$. 
On the other  hand, following Definition \ref{def:leafaug}, the position of $\stemv'$ is such that $d_{T_{i-1}}(x_i^*, \stemv') = \rho(x_i^*, x_i) - r_w$, while the position of $y = \stemv$ was that $d_{T_{i-1}} (x_i^*, s) = \rho(x_i^*, x_i) - d_L(x_i,  y)$. 
It then follows that $\stemv'$ must lie in the interior of path $P_{wy}$. 

Since the stem of $x_i$ w.r.t.~$\{w, x_i^*\}$ is in $P_{wy}$, it means that before we process $w$ in the construction of the $x_i$-orientation $\vec T_{i-1}$, there must exist some other leaf $z \in V(T_{i-1})$ such that the process of $z$ assigns the orientation of the edge $(y, y') \subset P_{wy}$ to be towards $y$; See Figure \ref{fig:lem:T_i} (c).  This is because if no such $z$ exists, then while processing $w$, we would have oriented the edge $(y,  y')$ towards stem $\stemv'$, thus towards $y'$, as the stem $\stemv'$ is in $P_{wy'}$. 
The point $z$ can be identified in constant time if during the construction of $\vec T_{i-1}$, we also remember, for each edge, the vertex the processing of which leads to orienting this edge. Such information can be easily computed in $O(i)$ time during the construction of $\vec T_{i-1}$. 

Now consider $z$. If $d_L(x_i, z) = \rho(x_i, z)$, then one can show that $y = \stemv$ is necessarily the stem for $x_i$ w.r.t.~$\{x_i, z\}$ as well (by simply computing the position of the stem using Definition \ref{def:leafaug}). In this case, considering the 4-tuple $\{x_i, x_i^*, z, w\}$, we are back to Case 1 (but for this new 4-tuple), which in turn means that this 4-tuple violates the four-point condition. Hence we are done. 

If $d_L(x_i, z) \neq \rho(x_i, z)$, then since we orient the edge $(y, y')$ towards $y$ during the process of leaf $z$, the stem of $x_i$ of the leaf augmentation at $\{x_i, z\}$ is in the path $P_{yx_i^*}$. 
By an argument similar to the proof that $\stemv'$ is in the interior of $P_{wy}$ above, we can show that $\rho(x_i, z) > d_L(x_i, z)$. 
Now consider the 4-tuple $\{x_i, x_i^*, z, u\}$: 
this leads us to an analogous case when $\rho(x_i, w) > d_L(x_i, w)$ for the 4-tuple $\{x_i, x_i^*, u, w \}$. 
Hence by a similar argument as at the beginning of Case 2, we can show that $\{x_i, x_i^*, z, u\}$ violates the four-point condition under metric $\rho$. 
\end{description}

Putting everything together, in either case, we can identify a 4-tuple $U$, which could be $\{x_i, x_i^*, u, w \}$, $\{x_i, x_i^*, z, w\}$, or $\{x_i, x_i^*, z, u\}$ as shown above, that violates the four-point condition under metric $\rho$. 
We simply remove these four points, adjust the resulting tree to obtain $T_i$ and set $X_i = X_{i-1} \setminus U$. 
The overall algorithm takes $O(i)$ time as claimed. This proves the lemma. 
\end{proof}

\begin{theorem}\label{thm:outliertree}
There exists a $4$-approximation algorithm for minimum outlier embedding into trees, with running time $O(n^2)$.
\end{theorem}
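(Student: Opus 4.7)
The plan is to run the iterative construction underlying Lemma \ref{lem:T_i} for $i = 1, 2, \ldots, n$, starting from $X_1 = \{x_1\}$ with $T_1$ the single-vertex tree, and output the outlier set $\oset = X \setminus X_n$ together with the tree $T_n$ witnessing the embedding of the remaining points. By invariant {\sf (I-1)}, $(X_n, \rho)$ embeds isometrically into $(T_n, d_{T_n})$, so $X \setminus \oset$ is a tree metric; this immediately settles feasibility.

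For the approximation factor, I would analyze how points accrue to $\oset$. Each iteration $i$ either grows $X_i$ (contributing nothing to $\oset$) or, by invariant {\sf (I-2)}, removes a 4-tuple $\{x_i, y, z, w\}$ with $y, z, w \in X_{i-1}$ that violates the four-point condition under $\rho$. The key observation is that the 4-tuples produced across different removal iterations are pairwise disjoint: $x_i$ is processed only at step $i$, while $y, z, w \in X_{i-1}$ could not have been discarded at any earlier step and are no longer present in any $X_j$ for $j \geq i$. Consequently, any feasible outlier set (in particular an optimal one $\oset^*$) must contain at least one point from each such 4-tuple, since otherwise the four-point condition would fail on the remaining metric. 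If $r$ denotes the number of removal iterations then $|\oset^*| \geq r$ while $|\oset| = 4r$, yielding $|\oset| \leq 4 |\oset^*|$.

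For the running time, Lemma \ref{lem:T_i} already provides the per-iteration bound of $O(i)$, so the total cost is $\sum_{i=1}^{n} O(i) = O(n^2)$, matching the claim. The substantive technical work, namely producing a violating 4-tuple in $O(i)$ time whenever the candidate leaf augmentation of $T_{i-1}$ fails, has already been discharged by Lemmas \ref{lem:comp-orientation}, \ref{lem:sinkexists}, and \ref{lem:T_i}; the proof of the theorem therefore amounts to assembling the above three ingredients, and I do not anticipate a genuine obstacle beyond verifying the disjointness of the extracted 4-tuples and summing the per-step complexities.
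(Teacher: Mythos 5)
Your proposal is correct and follows essentially the same approach as the paper: apply Lemma~\ref{lem:T_i} iteratively, use invariant {\sf (I-1)} for feasibility of the output tree metric, invariant {\sf (I-2)} for the factor-$4$ charging argument, and sum $O(i)$ over $i$ for the $O(n^2)$ running time. Your explicit observation that the removed $4$-tuples are pairwise disjoint (since removed points never re-enter any $X_j$) is the right way to make the charging rigorous and is a welcome clarification of a step the paper leaves implicit.
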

\begin{proof}
By Lemma \ref{lem:T_i} and induction on $i=1,\ldots,n$, it follows immediately that we can compute $T_n$ in time $O(n^2)$. 
By invariant {\sf (I-1)}, the output $(X_n, \rho)$ is a tree metric as it can be isometrically embedded into $T_n$. 
Furthermore, by invariant {\sf (I-2)}, each 4-tuple of points we removed forms a violation of the four-point condition, and thus must contain at least one point from any optimal outlier set. 
As such, the total number of points we removed can be at most four times the size of the optimal solution. 
Hence our algorithm is a $4$-approximation as claimed. 
\end{proof}

\subsection{Approximating outlier embeddings into $\mathbb{R}^d$}
\label{subsec:Rdmetricapprox}

In this section, we present a $2$-approximation algorithm for the minimum outlier embedding problem into the Euclidean space $\mathbb{R}^d$ in polynomial time, which matches our hardness result in Appendix \ref{appendix:sec:hardness}. 
Given two points $p, q \in \mathbb{R}^d$, let $d_E(p, q) = \| p - q \|_2$ denote the Euclidean distance between $p$ and $q$. 
\begin{definition}[$d$-embedding]
Given a discrete metric space $\X = (X, \rho)$, an \emph{$d$-embedding} of $\X$ is simply an isometric embedding $\phi: X \to \mathbb{R}^d$ of $\X$ into $(\mathbb{R}^d, d_E)$; that is, for any $x, y \in X$, $\rho(x,y) = d_E(\phi(x), \phi(y))$. 
We say that $\X$ is \emph{strongly $d$-embeddable} if it has a $d$-embedding, but cannot be isometrically embedded in $\mathbb{R}^{d-1}$. In this case, $d$ is called the \emph{embedding dimension of $\X$}. 
\end{definition} 

The following is a classic result in distance geometry of Euclidean spaces, see e.g \cite{Blumenthal70,SS86}. 
\begin{theorem}\label{thm:EuclideanEmbed}
The metric space $\X = (X, \rho)$ is strongly $d$-embeddable in $\mathbb{R}^d$ if and only if there exist $d+1$ points, say $X_d = \{ x_0, \ldots, x_d \}$, such that:\\
$~~~~${(i)}~ $(X_d, \rho)$ is strongly $d$-embeddable; and  \\
$~~~~$(ii)~for any $x, x' \in X \setminus X_d$, $(X_d \cup \{x, x'\}, \rho)$ is $d$-embeddable. 
\end{theorem}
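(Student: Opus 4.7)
My plan is to prove both directions of the characterization, with the forward direction being essentially trivial and the backward direction being the substantive part.

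For the forward direction ($\Rightarrow$), I would argue as follows. If $\X$ is strongly $d$-embeddable, pick any isometric embedding $\phi:X\to\mathbb{R}^d$ whose image is not contained in any proper affine subspace. Since $\phi(X)$ affinely spans $\mathbb{R}^d$, I can select $d+1$ elements $x_0,\ldots,x_d\in X$ such that $\phi(x_0),\ldots,\phi(x_d)$ are affinely independent. Setting $X_d=\{x_0,\ldots,x_d\}$, the restriction $\phi|_{X_d}$ witnesses strong $d$-embeddability of $(X_d,\rho)$, giving (i); condition (ii) is immediate since $\phi$ itself embeds all of $X$ into $\mathbb{R}^d$.

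For the backward direction ($\Leftarrow$), the key ingredient is a rigidity lemma: \emph{if $q_0,\ldots,q_d\in\mathbb{R}^d$ are affinely independent and $r_0,\ldots,r_d\ge 0$, then there is at most one $p\in\mathbb{R}^d$ with $\|p-q_j\|=r_j$ for all $j$.} This is a short computation: two such solutions $p_1,p_2$ satisfy $\langle p_1-p_2,\,q_j-q_0\rangle=0$ for all $j\ge 1$ by expanding $\|p_i-q_j\|^2$ and subtracting, and the vectors $\{q_j-q_0\}_{j=1}^d$ span $\mathbb{R}^d$ by affine independence, forcing $p_1=p_2$. A companion fact I will invoke is that any two isometric embeddings of a strongly $d$-embeddable set into $\mathbb{R}^d$ differ by a rigid motion (possibly including reflection), since affinely spanning point configurations with prescribed pairwise distances are determined up to the full Euclidean isometry group.

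With these tools in hand, I would fix an isometric embedding $\phi_d:X_d\to\mathbb{R}^d$ guaranteed by (i), whose image is affinely independent. For each $x\in X\setminus X_d$, condition (ii) (applied with $x'=x$, or by restricting any embedding of $X_d\cup\{x,x'\}$) gives an isometric embedding $\psi:X_d\cup\{x\}\to\mathbb{R}^d$; composing with the unique rigid motion that sends $\psi(X_d)$ to $\phi_d(X_d)$ pointwise yields an embedding $\tilde\psi$ with $\tilde\psi|_{X_d}=\phi_d$, and then I define $\phi(x):=\tilde\psi(x)$. By the rigidity lemma $\phi(x)$ is independent of the chosen $\psi$, so $\phi:X\to\mathbb{R}^d$ is well-defined and extends $\phi_d$.

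It remains to check that $\phi$ is an isometric embedding, i.e., $d_E(\phi(x),\phi(x'))=\rho(x,x')$ for any two new points $x,x'\in X\setminus X_d$ (distances to $X_d$ are correct by construction). Here I invoke (ii) for the pair $\{x,x'\}$: it yields an isometric embedding $\sigma:X_d\cup\{x,x'\}\to\mathbb{R}^d$, which I again normalize by a rigid motion so that $\sigma|_{X_d}=\phi_d$. The rigidity lemma then forces $\sigma(x)=\phi(x)$ and $\sigma(x')=\phi(x')$, whence $d_E(\phi(x),\phi(x'))=d_E(\sigma(x),\sigma(x'))=\rho(x,x')$. Finally, $\phi(X)$ contains $\phi_d(X_d)$, which is affinely independent, so $\phi(X)$ affinely spans $\mathbb{R}^d$ and $\X$ is strongly $d$-embeddable. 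The main obstacle is essentially bookkeeping around the rigidity/uniqueness lemma, making sure that reflections do not spoil the alignment of the two embeddings of $X_d\cup\{x,x'\}$; this is handled cleanly by insisting on affine independence of the image of $X_d$, which rules out reflection ambiguity once $\phi_d$ is fixed.
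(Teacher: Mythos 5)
The paper does not prove Theorem~\ref{thm:EuclideanEmbed}: it is stated as ``a classic result in distance geometry of Euclidean spaces'' and simply cited (Blumenthal; Schoenberg--Seidel), so there is no in-paper proof to compare your argument against. Your proof is, however, a correct and self-contained derivation of exactly this statement, built on the standard rigidity facts: (a) a point in $\mathbb{R}^d$ is uniquely determined by its distances to $d+1$ affinely independent points, and (b) once the image of $X_d$ affinely spans $\mathbb{R}^d$, the rigid motion aligning two congruent copies of $\phi_d(X_d)$ is unique (no reflection ambiguity). The one step worth making fully explicit is that \emph{every} isometric embedding $\psi$ of $X_d\cup\{x\}$ into $\mathbb{R}^d$ has $\psi(X_d)$ affinely independent --- this is needed so that the aligning rigid motion exists and is unique --- and it follows because $(X_d,\rho)$ is strongly $d$-embeddable: if $\psi(X_d)$ lay in a proper affine subspace, then $(X_d,\rho)$ would be $(d-1)$-embeddable, a contradiction. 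With that observation in place, the well-definedness of $\phi(x)$ via the rigidity lemma and the verification of $d_E(\phi(x),\phi(x'))=\rho(x,x')$ via a normalized embedding of $X_d\cup\{x,x'\}$ go through, and the non-$(d-1)$-embeddability of $\X$ is inherited from that of the subset $X_d$. The forward direction is immediate, as you note, since an embedding witnessing strong $d$-embeddability must have image affinely spanning $\mathbb{R}^d$, so one can extract $d+1$ points with affinely independent images.
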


Furthermore, given an $m$-point metric space $(X, \rho)$, it is known that one can decide whether $(X, \rho)$ is embeddable in some Euclidean space by checking whether a certain $m \times m$ matrix derived from the distance matrix $\rho$ is positive semi-definite, and the rank of this matrix gives the embedding dimension of $\X$; see e.g.~\cite{PR03}. 

Following Theorem \ref{thm:EuclideanEmbed}, one can easily come up with a $(d+3)$-approximation algorithm for minimum outlier embedding into $\mathbb{R}^d$, by simply checking whether each $(d+3)$-tuple of points is $d$-embeddable, and if not, removing all these $d+3$ points. Our main result below is an $2$-approximation algorithm within the same running time. 
In particular, Algorithm \ref{alg:outlierRd} satisfies the requirements of Theorem \ref{thm:outlierRd}, and the proof is in Appendix \ref{appendix:sec:approx}. 
\begin{theorem}\label{thm:outlierRd}
Given an $n$-point metric space $(X, \rho)$, for any $d\geq 1$, there exists $2$-approximation algorithm for minimum outlier embedding into $\mathbb{R}^d$, with running time $O(n^{d+3})$.
\end{theorem}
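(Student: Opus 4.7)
The plan is to exploit the following rigidity: once $d+1$ points $X_d=\{x_0,\dots,x_d\}$ that form a strongly $d$-embeddable tuple are fixed in $\mathbb{R}^d$ via any isometric embedding, the image of \emph{any} other point $x\in X$ is uniquely determined by its distances to $X_d$ (because $d+1$ affinely independent points in $\mathbb{R}^d$ have linearly independent pairwise difference vectors, so the intersection of the perpendicular bisector hyperplanes between $x$ and successive pairs of $X_d$ is a single point). This reduces the problem of realizing $X\setminus \oset$ in $\mathbb{R}^d$ to resolving pairwise conflicts under a fixed embedding of $X_d$.

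The algorithm enumerates, over all $\binom{n}{d+1}=O(n^{d+1})$ choices of $(d+1)$-subsets $X_d\subseteq X$ that are strongly $d$-embeddable (which we verify in $\poly(d)$ time via the Cayley--Menger determinant). For each such $X_d$: (i) fix one concrete embedding $\phi_0:X_d\hookrightarrow \mathbb{R}^d$; (ii) for every remaining point $x\in X\setminus X_d$, check whether $(X_d\cup\{x\},\rho)$ is $d$-embeddable (again by Cayley--Menger), and if so compute its unique forced position $\phi(x)\in\mathbb{R}^d$; let $A$ denote the set of points failing this test (they \emph{must} be outliers for this choice of $X_d$); (iii) construct the \emph{conflict graph} $G_{X_d}$ on $(X\setminus X_d)\setminus A$ where $\{x,x'\}$ is an edge iff $\|\phi(x)-\phi(x')\|_2\neq \rho(x,x')$; (iv) compute a $2$-approximate vertex cover $V$ of $G_{X_d}$ by a maximal matching; (v) record the candidate outlier set $A\cup V$. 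Output the smallest such candidate over all $X_d$.

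For the approximation guarantee, let $\OPT^*$ be an optimal outlier set. By Theorem~\ref{thm:EuclideanEmbed} applied to $(X\setminus \OPT^*,\rho)$, there is a strongly $d$-embeddable $(d+1)$-tuple $X_d^*\subseteq X\setminus\OPT^*$ (the degenerate case $|X\setminus\OPT^*|\le d$ can be handled either by running the analogous routine for each $d'\le d$ or directly: then $|\OPT^*|\ge n-d$, so the trivial outlier set of size $n-d-1$ is within a factor $2$). When the algorithm tries $X_d=X_d^*$: every $x\in X\setminus \OPT^*$ satisfies $(X_d^*\cup\{x\},\rho)\in d$-embeddable, so $A\subseteq \OPT^*$; and for any pair $x,x'\in X\setminus \OPT^*$ that both pass the individual test, their forced positions must agree with the (unique) isometric embedding of $X\setminus\OPT^*$, so $\|\phi(x)-\phi(x')\|=\rho(x,x')$, meaning $\{x,x'\}$ is not an edge of $G_{X_d^*}$. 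Thus every edge of $G_{X_d^*}$ has at least one endpoint in $\OPT^*\setminus A$, so $\OPT^*\setminus A$ is a vertex cover and the maximal-matching heuristic gives $|V|\le 2|\OPT^*\setminus A|$. Combined, the output satisfies $|A\cup V|\le |A|+2|\OPT^*\setminus A|\le 2|\OPT^*|$.

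The running time follows from $O(n^{d+1})$ iterations, each doing $O(n)$ embeddability tests at cost $\poly(d)$ and $O(n^2)$ work to build the conflict graph and extract a maximal matching, totaling $O(n^{d+3})$. The main obstacle I expect is the bookkeeping around strongly $d$-embeddable tuples: one must argue cleanly that (a) an optimal $X_d^*$ always exists inside $X\setminus \OPT^*$ (or we are in a degenerate regime that is trivial to handle), and (b) the forced coordinates $\phi(x)$ are well-defined and can be extracted from the Cayley--Menger machinery in a way that is numerically or combinatorially exact enough to compare $\|\phi(x)-\phi(x')\|$ against $\rho(x,x')$ symbolically. Once these points are nailed down, the vertex-cover argument above is essentially the whole proof.
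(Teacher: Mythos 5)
Your proposal is essentially the paper's own proof. The paper's Algorithm~\ref{alg:outlierRd} does exactly what you describe: enumerate all $(d+1)$-tuples $Y_d$, discard points $x$ for which $(Y_d\cup\{x\},\rho)$ is not embeddable, build a conflict graph on the survivors, run a $2$-approximate vertex cover, and return the smallest candidate outlier set over all choices of $Y_d$; the correctness argument via Theorem~\ref{thm:EuclideanEmbed} and the ``$\OPT^*\setminus A$ is a vertex cover'' observation is identical, and the running-time accounting ($O(n^{d+1})$ tuples, $O(n^2)$ per tuple) matches.

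The only substantive difference is cosmetic and concerns the degenerate lower-dimensional regime. You restrict attention to strongly $d$-embeddable $(d+1)$-tuples, which forces you to handle separately the case where $(X\setminus\OPT^*,\rho)$ has embedding dimension strictly less than $d$ (not merely the case $|X\setminus\OPT^*|\le d$ that your parenthetical names); your fallback of ``running the analogous routine for each $d'\le d$'' does cover this, but it is a genuine gap in precision. The paper sidesteps the issue by iterating over \emph{all} $(d+1)$-tuples, computing the embedding dimension $d'\le d$ of each $Y_d$ on the fly, and performing $d'$-embeddability checks rather than $d$-embeddability checks; this makes the correctness argument uniform without a case split. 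Relatedly, the paper phrases the pair test as ``is $(Y_d\cup\{z,z'\},\rho)$ $d'$-embeddable?'' (decidable by a Cayley--Menger/PSD-rank test, which is a sign condition on polynomials in the squared distances) rather than computing explicit coordinates $\phi(z)$ and comparing $\|\phi(z)-\phi(z')\|$ to $\rho(z,z')$; the two tests are logically equivalent given the rigidity you invoke, but the paper's formulation avoids exactly the symbolic-precision bookkeeping you flag as your ``main obstacle.''
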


\begin{algorithm}[htbp]
\caption{$2$-approximation outlier embedding in $\mathbb{R}^d$. \label{alg:outlierRd}}
\begin{algorithmic}
\Require An $n$-point metric space $(X, \rho)$ 
\Ensure A set of outliers $\widehat{\oset}$  \\
~~~~~~~~~~~~Initialize the set of candidate outlier sets $\C$ to be empty. 
\State {\sf (Step-0)}~~For each $d+1$ number of distinct points $Y_d = \{y_0, \ldots, y_d \} \subset X$, perform the following: \\
~~~~~~~~~~~~~Initialize sets $Z$ and $\oset$ to be the empty set. 
\begin{description}\denselist
\item ~~~~{\sf (Step-1)}~~Check whether $(Y_d, \rho)$ is $d$-embeddable in $\mathbb{R}^d$: If not, return to (Step-0). Otherwise, compute its embedding dimension $d'$; note, $d' \le d$. 
\item ~~~~{\sf (Step-2)}~~For each remaining point $x \in X \setminus Y_d$, check whether $(Y_d \cup \{s \}, \rho)$ is $d'$-embeddable: If yes, insert $x$ to $Z$; otherwise, insert $x$ to the outlier set $\oset$.
\item ~~~{\sf (Step-3)}~~Construct a graph $G = (Z, E)$, where $(z, z') \in E$ if $(Y_d \cup \{z, z'\}, \rho)$ is {\bf not} $d'$-embeddable. Compute a $2$-approximation $Z' \subset Z$ of the vertex cover of $G$. Set $\oset = \oset \cup Z'$, and we add the set $\oset$ to the collection of candiate outlier sets $\C$.
\end{description}
{\sf (Step-4)} Let $\widehat \oset$ be the set from $\C$ with smallest cardinality. We return $\widehat \oset$ as the outlier set. 
\end{algorithmic}
\end{algorithm}

\paragraph{Hardness results.} 
In Appendix \ref{appendix:sec:hardness}, we show that  the minimum outlier embedding problems into ultrametrics, trees and Euclidean space are all NP-hard, by reducing the Vertex Cover problem to them in each case. In fact, assuming the unique game conjecture, it is NP-hard to approximate each of them within $2-\nu$, for any positive $\nu$. 
For the case of minimum outlier embedding into Euclidean space, we note that our $2$-approximation algorithm above matches the hardness result.

\section{Bi-criteria approximation algorithms}
\label{sec:bicriteria}
\newcommand{\origin}		{{\mathbf o}}
\newcommand{\myconst}		{\mathrm{C}}

\subsection{Bi-criteria approximation for embedding into ultrametrics}
Let $T=(X,E)$ be a tree with non-negative edge weights.
The \emph{ultrametric induced by T} is the ultrametric $\U=(X,\delta)$ where for every $x,y\in X$ we have that $\delta(x,y)$ is equal to the maximum weight of the edges in the unique $x$-$y$ path in $T$
(it is easy to verify that the metric constructed as such is indeed an ultrametric \cite{farach1995robust}). 
Given an metric space $\X = (X, \rho)$, we can view it as a weighted graph and talk about its minimum spanning tree (MST). 
The following result is from \cite{farach1995robust}.

\begin{lemma}[Farach, Kannan and Warnow \cite{farach1995robust}]\label{lem:ultra_opt}
Let $\X=(X,\rho)$ be a metric space and let $\U^*$ be an ultrametric minimizing $\|\X-\U^*\|_{\infty}$.
Let $\widehat{\U}$ be ultrametric induced by a MST of $\X$.
Then there exists $\alpha\geq 0$, such that $\U^* = \widehat{\U}+\alpha$.

In particular,  let $\beta=\|\widehat{\U}-\X\|_{\infty}$.
Then $\U^* = \widehat{\U}+\beta/2$. This further implies that $\|\U^*-\X\|_{\infty} = \|\widehat{\U}-\X\|_{\infty}/2 =\beta/2$.
\end{lemma}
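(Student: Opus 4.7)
The plan is to exhibit $\widehat{\U}+\beta/2$ as an optimal ultrametric under the $\ell_\infty$ objective, by pairing a direct distortion calculation with a matching lower bound on the error of any ultrametric. The characterization $\U^* = \widehat{\U}+\alpha$ then falls out by optimizing $\alpha\geq 0$ over the one-parameter family $\widehat{\U}+\alpha$.

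First I would record the subdominance property $\widehat{\U}\leq \rho$ pointwise. This is the standard bottleneck-spanning-tree fact: among all $x$-$y$ paths in the weighted complete graph $(X,\rho)$, the MST path minimizes the maximum edge weight, and the direct edge $\{x,y\}$ itself is one candidate path with bottleneck $\rho(x,y)$. Hence $\rho(x,y)-\widehat{\U}(x,y)\in[0,\beta]$ for every $x\neq y$. Since the three-point condition is preserved under the addition of a non-negative constant to every pairwise distance, $\U' := \widehat{\U}+\beta/2$ is an ultrametric, and
\[
\|\U'-\X\|_\infty \;=\; \max_{x\neq y}\bigl|\widehat{\U}(x,y)-\rho(x,y)+\beta/2\bigr| \;=\; \beta/2.
\]

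The main step is the matching lower bound: any ultrametric $\U^*$ has $\gamma := \|\U^*-\X\|_\infty \geq \beta/2$. Let $(x^*,y^*)$ attain $\rho(x^*,y^*)-\widehat{\U}(x^*,y^*) = \beta$, and let $x^*=u_0,u_1,\dots,u_k=y^*$ be the path in the MST. Every edge $(u_{i-1},u_i)$ lies in the MST, so $\rho(u_{i-1},u_i)=\widehat{\U}(u_{i-1},u_i)\leq \widehat{\U}(x^*,y^*)$. Iterating the three-point condition along this path for $\U^*$ yields
\[
\U^*(x^*,y^*) \;\leq\; \max_{1\leq i\leq k}\U^*(u_{i-1},u_i) \;\leq\; \max_i \rho(u_{i-1},u_i)+\gamma \;=\; \widehat{\U}(x^*,y^*)+\gamma.
\]
Combined with the distortion bound $\U^*(x^*,y^*)\geq \rho(x^*,y^*)-\gamma = \widehat{\U}(x^*,y^*)+\beta-\gamma$, we obtain $\gamma\geq \beta/2$. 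This closes the gap and establishes $\U' = \widehat{\U}+\beta/2$ as optimal, with $\alpha=\beta/2$ serving as the shift demanded by the lemma.

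I expect the hard part to be precisely this lower bound: the proof crucially uses that an MST edge realizes the metric distance and is majorized by the bottleneck of the path containing it, which is what couples the distortion of an arbitrary ultrametric $\U^*$ to $\beta$ and produces the characteristic factor-of-two loss. The remaining pieces — ultrametricity of $\widehat{\U}$ itself, the subdominance inequality, ultrametricity of $\widehat{\U}+c$ for $c\geq 0$, and the scalar minimization $\min_{\alpha\geq 0}\max(\alpha,\beta-\alpha) = \beta/2$ that pins down $\alpha$ — are all elementary by comparison.
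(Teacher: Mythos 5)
The paper cites this lemma from Farach, Kannan, and Warnow \cite{farach1995robust} and does not supply its own proof, so there is no internal argument to compare against; I therefore evaluate your proposal on its own. Your argument is correct and is indeed the standard route: (a) subdominance $\widehat{\U}\le\rho$ from the minimax-path property of the MST, (b) $\widehat{\U}+\beta/2$ achieves $\ell_\infty$ error exactly $\beta/2$ because $\rho-\widehat{\U}$ ranges over $[0,\beta]$ with both endpoints attained, and (c) the matching lower bound by chaining the ultrametric inequality for an arbitrary $\U^*$ along the MST path realizing the slack $\beta$, then squeezing $\U^*(x^*,y^*)$ between $\widehat{\U}(x^*,y^*)+\beta-\gamma$ and $\widehat{\U}(x^*,y^*)+\gamma$. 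The one caveat is that your argument establishes the existence of an optimal ultrametric of the form $\widehat{\U}+\beta/2$, not that \emph{every} minimizer $\U^*$ has this form, which is what the lemma's wording literally asserts. The minimizer need not be unique: for the 3-point metric $\rho(a,b)=1$, $\rho(a,c)=2$, $\rho(b,c)=3$, both $(1.5,2.5,2.5)=\widehat{\U}+1/2$ and $(1,2.5,2.5)$ are ultrametrics with $\ell_\infty$ error $1/2$, yet only the first is a constant shift of $\widehat{\U}$. However, the algorithmic use of the lemma in the paper only needs the existence of such an optimal shift and the value $\|\U^*-\X\|_\infty=\beta/2$, and your proof delivers precisely that.
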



\begin{theorem}
There exists a polynomial-time algorithm which given an $n$-point metric space $\X=(X,\rho)$, $\eps\geq 0$, and $k\geq 0$, such that $\X$ admits a $(\eps, k)$-outlier embedding into an ultrametric, outputs a $(O(\eps \log(n)), 3 k)$-outlier embedding into an ultrametric.
\end{theorem}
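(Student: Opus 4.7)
The plan is to combine a greedy triple-removal scheme with the MST-induced ultrametric characterization from Lemma~\ref{lem:ultra_opt}. The key observation is that if $\X$ admits a $(\eps,k)$-outlier embedding with witness $K^*\subseteq X$, $|K^*|\le k$, then every triple of points in $X\setminus K^*$ satisfies the \emph{relaxed three-point condition}
\[
\rho(x,y) \le \max\{\rho(x,z), \rho(y,z)\} + 2\eps\diam(\X),
\]
because the $\ell_\infty$ distortion on each pairwise distance is at most $\eps\diam(\X)$ and the ultrametric image obeys the exact three-point condition. Call a triple \emph{bad} if it violates this relaxed condition; then every bad triple must contain at least one point of $K^*$.

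The algorithm is as follows. Process all triples in an arbitrary order, and whenever a bad triple $(x,y,z)$ with all three points still present is encountered, add $x,y,z$ to the output outlier set $\widehat{\oset}$ and remove them from further consideration. By the observation above, each removal event charges at least one fresh point of $K^*$, so at most $|K^*|\le k$ events occur, giving $|\widehat{\oset}|\le 3k$. The remaining set $\tilde X = X\setminus\widehat{\oset}$ then satisfies the relaxed three-point condition on all its triples.

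Finally, I would take $\widehat{\U}$ to be the MST-induced ultrametric on $\tilde X$. The bottleneck property of MSTs gives $\widehat{\U}(x,y)\le \rho(x,y)$ for all $x,y\in\tilde X$. For the reverse direction, given an MST path $(x=v_0,v_1,\ldots,v_m=y)$, applying the relaxed three-point condition to $(v_0,v_{\lfloor m/2\rfloor},v_m)$ and recursing on the two halves yields
\[
\rho(v_0,v_m) \le \max_{0\le i<m} \rho(v_i,v_{i+1}) + 2\eps\diam(\X)\cdot\lceil\log_2 m\rceil = \widehat{\U}(x,y) + O(\eps\log n)\diam(\X).
\]
The main obstacle is precisely this recursive halving argument: the $\log n$ factor enters because each halving level pays one extra slack of $2\eps\diam(\X)$, and its validity relies on every intermediate MST vertex $v_{\lfloor m/2\rfloor}$ belonging to $\tilde X$ so that the relaxed condition applies at every level. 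This is the reason we must remove \emph{all three} points of each bad triple in the preceding step rather than only one. Combining both bounds, $(\widehat{\oset},\widehat{\U})$ is a $(O(\eps\log n),3k)$-outlier embedding of $\X$, and the whole procedure runs in polynomial time since both MST computation and ultrametric evaluation are polynomial.
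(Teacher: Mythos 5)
Your proposal is correct and takes essentially the same approach as the paper: remove all three points of every triple violating the relaxed three-point condition $\rho(x,y)\le\max\{\rho(x,z),\rho(z,y)\}+2\eps\diam(\X)$, then output the MST-induced ultrametric and bound the distortion by a recursive-halving (logarithmic-depth) induction along MST paths. The only cosmetic differences are that you justify the ``every bad triple meets $K^*$'' step directly from the $\ell_\infty$ distortion definition rather than appealing to Lemma~\ref{lem:ultra_opt}, and you explicitly invoke the MST bottleneck property for the direction $\widehat{\U}(x,y)\le\rho(x,y)$, which the paper leaves implicit.
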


\begin{proof}
For simplicity, assume that the diameter $\diam(\X) = 1$. 
The algorithm is as follows.
We first enumerate all triples $\{x,y,z\}\in {X \choose 3}$.
For any such triple, if $\rho(x,y) > \max\{\rho(x,z), \rho(z,y)\} + 2\eps$, then we remove $x$, $y$, and $z$ from $X$.
Let $X'$ be the resulting point set.
We output the ultrametric $\U'$ induced by an MST of the metric space $(X',\rho)$.
This completes the description of the algorithm. 

It suffices to prove that the output is indeed an $(O(\eps \log(n)), 3 k)$-outlier embedding.
Let $\oset^*\subseteq X$, with $|\oset^*| \leq k$, be such that $(X\setminus \oset^*, \rho)$ admits a $(\eps, 0)$-outlier embedding into an ultrametric.
Let $\{x,y,z\}\in {X \choose 3}$ be such that $\rho(x,y) > \max\{\rho(x,z), \rho(z,y)\} + 2\eps$.
It follows by Lemma \ref{lem:ultra_opt} that $(\{x,y,z\}, \rho)$ does not admit a $(\eps, 0)$-outlier embedding into an ultrametric.
Thus, $\oset^*\cap \{x,y,z\}\neq \emptyset$.
It follows that $|X\setminus X'| \leq 3\cdot |\oset^*| \leq 3k$.
In other words, the algorithm removes at most $3k$ points.

It remains to bound the distortion between $\U'$ and $(X', \rho)$.
Let $T$ be the MST of $(X',\rho)$ such that the algorithm outputs the ultrametric $\U'=(X', \delta)$ induced by $T$.
We will prove by induction on $i\geq 0$, that for all $x,y\in X'$, if the $x$-$y$ path in $T$ contains at most $2^i$ edges, then 
$\rho(x,y) \leq \delta(x,y) + 2i \cdot \eps.$

For the base case $i=0$ we have that $\{x,y\}\in E(T)$. Since $T$ is the minimum spanning tree of $(X', \rho)$, it follows that  $\delta(x,y)=\rho(x,y)$, proving the base case. 
For the inductive step, let $x,y\in X'$ such that the $x$-$y$ contains at most $2^i$ edges, for some $i \geq 1$.
Let $w\in X'$ be such that $w$ is in the $x$-$y$ path in $T$, and moreover the $x$-$w$ and $w$-$z$ paths in $T$ have at most $2^{i-1}$ edges each.
Since $\{x,y,w\} \in X'$, it follows that the triple $\{x,y,w\}$ was not removed by the algorithm, and thus
\begin{align}
\rho(x,y) &\leq \max\{\rho(x,w), \rho(w,y)\} + 2\eps. \label{eq:ultra0}
\end{align}
By the inductive hypothesis we have
\begin{align}
\rho(x,w) &\leq \delta(x,w) + 2(i-1)\cdot \eps,~~\text{and}~~\rho(w,y) \leq \delta(w,y) +  2(i-1)\cdot \eps. \label{eq:ultra1}
\end{align}
By \eqref{eq:ultra0} and \eqref{eq:ultra1} we get
\begin{align*}
\rho(x,y) &\leq \max\{\delta(x,w), \delta(w,y)\} + 2(i-1) \cdot \eps + 2\eps 
\leq \delta(x,y) +  2 i \cdot \eps. 
\end{align*}
Hence $\U'$ is an $(2\eps \log n, 0)$-embedding of $(X', \rho)$. This, together with the bound on $|X \setminus X'|$ concludes the proof. 
\end{proof}

\subsection{Bi-criteria approximation for embedding into $\mathbb{R}^d$}

\begin{theorem}\label{thm:bicriteriaRdFinal}
Given an $n$-point metric space $\X = (X, \rho)$, if $\X$ admits a $(\eps, k^*)$-\noembedding{} in $\mathbb{R}^d$, 
then we have an algorithm that outputs an $(O(\sqrt{\eps}), 2k^*)$-\noembedding{} of $\X$ in $\mathbb{R}^d$ in $O(n^{d+3})$ time. 
Here the big-$O$ notation hides constants depending on the dimension $d$. 
\end{theorem}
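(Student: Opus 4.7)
The plan is to lift Algorithm \ref{alg:outlierRd} to the bi-criteria setting: I replace every exact test for ``$d$-embeddability'' by an approximate test that tolerates $\ell_\infty$ distortion up to $\eps\cdot\diam(\X)$, keep the same enumerate-classify-vertex-cover template, and extract the $2k^*$ outlier bound from the $2$-approximate vertex cover exactly as in Theorem \ref{thm:outlierRd}. The $\sqrt{\eps}$ loss on distortion will come from the classical square-root loss of MDS-type reconstruction, where a spectral perturbation of order $\delta$ of the centered squared-distance matrix of a finite Euclidean set translates to a perturbation of order $\sqrt{\delta}$ in the reconstructed coordinates.

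First I would prove an approximate analogue of Theorem \ref{thm:EuclideanEmbed}: if $(X,\rho)$ admits a $(\eps,0)$-\noembedding{} $\phi$ into $\mathbb{R}^d$ with $\phi(X)$ affinely spanning $\mathbb{R}^d$ (w.l.o.g.), then there exist $d+1$ points $Y_d\subseteq X$ whose image under $\phi$ is affinely independent, and restricting $\phi$ witnesses (i) $(Y_d,\rho)$ admits a $(\eps,0)$-\noembedding{} into $\mathbb{R}^d$, and (ii) for every $x,x'\in X\setminus Y_d$, $(Y_d\cup\{x,x'\},\rho)$ admits a $(\eps,0)$-\noembedding{} into $\mathbb{R}^d$. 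Each such constant-size check reduces, via the standard centered Gram matrix $B$ built from squared distances \cite{Blumenthal70,PR03}, to deciding whether $B$ lies within spectral distance $O(\eps\cdot\diam(\X)^2)$ of some PSD matrix of rank at most $d$; thresholding eigenvalues solves this in constant time, and the square-root loss when rebuilding coordinates from $B$ is the source of the $\sqrt{\eps}$ factor in the final distortion.

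I would then run Algorithm \ref{alg:outlierRd} with all embeddability checks replaced by this approximate test. Consider the iteration whose $Y_d$ is disjoint from $\oset^*$ (guaranteed by the first step of the plan). Step $2$ places into $\oset$ only points of $\oset^*$, since $(Y_d\cup\{x\},\rho)\subseteq(X\setminus\oset^*,\rho)$ admits a $(\eps,0)$-\noembedding{} for every $x\in X\setminus\oset^*$. For Step $3$, every conflict edge of $G$ must have at least one endpoint in $\oset^*\cap Z$, so $\oset^*\cap Z$ is a vertex cover and the $2$-approximate cover $Z'$ satisfies $|Z'|\le 2|\oset^*\cap Z|$. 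Summing, this iteration outputs at most $|\oset^*\setminus Z|+2|\oset^*\cap Z|\le 2|\oset^*|=2k^*$ outliers, and Step $4$'s final $\widehat{\oset}$ can only be smaller. The running time $O(n^{d+3})$ follows from the outer loop over $O(n^{d+1})$ bases times $O(n^2)$ pairwise checks.

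The main obstacle is to show that, at this iteration, the surviving set $(X\setminus Y_d)\setminus(\oset\cup Z')$ together with $Y_d$ actually admits a \emph{joint} $(O(\sqrt{\eps}),0)$-\noembedding{}, rather than merely being pairwise compatible with $Y_d$. The key observation is that the approximate embedding $\Phi_0$ of $Y_d$ fixes a $d$-simplex affinely spanning $\mathbb{R}^d$, which (up to $O(\sqrt{\eps}\cdot\diam(\X))$ tolerance) uniquely determines the position of every other point from its distances to $Y_d$. Extending $\Phi_0$ to a map $\Phi$ on the whole surviving set by this rule, the MDS perturbation bound yields $|d_E(\Phi(x),\Phi(y_j))-\rho(x,y_j)|\le O(\sqrt{\eps}\cdot\diam(\X))$ for every $y_j\in Y_d$ and every surviving $x$. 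For any surviving pair $x,x'$ passing Step $3$, the witness $\psi$ of joint $(\eps,0)$-\noembedding{} of $(Y_d\cup\{x,x'\},\rho)$ can be aligned to $\Phi_0$ by a rigid motion of $\mathbb{R}^d$, after which both $\psi(x)$ and $\psi(x')$ lie within $O(\sqrt{\eps}\cdot\diam(\X))$ of $\Phi(x)$ and $\Phi(x')$ respectively by the same uniqueness-mod-$\sqrt{\eps}$ property. Triangle inequality then gives $|d_E(\Phi(x),\Phi(x'))-\rho(x,x')|\le |d_E(\psi(x),\psi(x'))-\rho(x,x')|+O(\sqrt{\eps}\cdot\diam(\X))=O(\sqrt{\eps}\cdot\diam(\X))$ for every surviving pair, establishing the claimed $(O(\sqrt{\eps}),2k^*)$-\noembedding{}.
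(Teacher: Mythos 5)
Your high-level strategy matches the paper's: lift Algorithm~\ref{alg:outlierRd} by replacing the exact embeddability tests with near-isometric ones, enumerate over all $(d+1)$-point bases, and extract the $2k^*$ outlier bound from the $2$-approximate vertex cover at the iteration where $Y_d$ avoids $\oset^*$. That accounting of the outlier count is correct. However, there is a genuine gap on the distortion side, at the step where you claim that the base simplex ``uniquely determines the position of every other point from its distances to $Y_d$'' up to $O(\sqrt{\eps}\,\diam(\X))$.

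You require only that $\phi(Y_d)$ be \emph{affinely independent}, and that is not enough. If $\phi(Y_d)$ is a nearly degenerate simplex (small height in some direction), then an $\eps\,\diam(\X)$ perturbation of the distances from a test point $x$ to $Y_d$ can move the reconstructed position of $x$ by an amount that is not bounded in terms of $\eps$ at all --- it blows up as the inverse of the smallest height of the simplex. The same conditioning problem breaks your alignment step: after aligning the witness $\psi$ on $(Y_d\cup\{x,x'\})$ to $\Phi_0$ by a rigid motion, the displacement $d_E(\psi(x),\Phi(x))$ is $O(\sqrt{\eps}\,\diam(\X))$ only if the frame $\Phi_0(Y_d)$ is well-conditioned. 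It also lurks in the MDS/spectral route you sketch: the operator-norm perturbation of the centered Gram matrix is $O(\eps\,\diam(\X)^2)$, but recovering coordinates with $O(\sqrt{\eps}\,\diam(\X))$ error requires the relevant nonzero eigenvalues to be bounded away from zero, which is precisely a well-conditioning assumption that an arbitrary affinely independent base does not provide.

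The paper's fix is to take $Y^*$ to be a \emph{maximal} sequence in the image $\phi(Z^*)$: pick a diametral pair, then greedily the point farthest from the current affine hull, and so on. This guarantees the conditioning needed, and it is exactly the hypothesis of Proposition~\ref{prop:ATV} (from \cite{ATV01}), which states that any normalized near-isometry of a finite set containing a maximal normalized base moves points by at most $c_d\sqrt{\delta}\,\diam$. The paper uses this proposition both to prove that every point of $Z^*$ survives Steps~2--3 (hence the $C_d\sqrt{\eps}\,\diam(\X)+\eps\,\diam(\X)$ tolerance in Algorithm~\ref{alg:bicriteriaRd}, which is deliberately looser than the $\eps\,\diam(\X)$ test you propose) and to obtain the final $O(\sqrt{\eps})$ distortion on the surviving set. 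Your argument would close if you insisted that the relevant $Y_d$ be a maximal sequence in the optimal image and replaced the unqualified uniqueness claim with a quantitative rigidity statement of that form.
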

We only state the main result here, and the details can be found in Appendix \ref{appendix:sec:bicriteriaRd}. 
In particular, the high level structure of our algorithm parallels that of Algorithm \ref{alg:outlierRd}, which intuitively is an algorithm for the special case when $\eps = 0$. 
However, the technical details here are much more involved so as to tackle several issues caused by the near-isometric embedding.

\bibliographystyle{plain}
\bibliography{main}

\newpage
\appendix

\section{Missing details from Section \ref{sec:approx}}
\label{appendix:sec:approx}

\paragraph{Proof of Lemma \ref{lem:comp-orientation}.}
Start with an arbitrary ordering of $X_{i-1}$, and follow the procedure laid-out in Definition \ref{def:globalorientation}. 
Specifically, for each $v_j$, we can compute the distance $r_j = (\rho(v_j, x_i) + \rho(x_i^*, v_j) - \rho(x_i, x_i^*))/ 2$ in constant time. Hence we know the distance $l_j = \rho(v_j, x_i)  - r_j$ from $v_j$ to the stem of $x_i$ w.r.t. $\{v_j, x_i^*\}$. 
Now, starting from $v_j$, we follow the unique path towards the root $x_i^*$, till we meet an masked edge or we  meet the root $x_i^*$. (In other words, we do not traverse the entire path $P_j$ from $v_j$ to $x_i^*$.) 
For any edge we traverse, first we orient it towards $x_i^*$ till we meet an edge $(w, w')$ such that $\rho(w, v_j) < l_j$ and $\rho(w', v_j) > l_j$, or a vertex $w$ such that $\rho(w,v_j) = l_j$. In this case, we set this edge or vertex to be a sink (as it would contain the stem of $x_i$ at $\{v_j, x_i^*\}$) and associate $v_j$ as its generating vertex. 
After the sink, we orient each edge to be towards the sink, till we meet an masked edge. 
The entire procedure to process all $v_j$'s takes $O(|V(T_{i-1})|)$ time -- since all leaves of $T_{i-1}$  are from $X_{i-1}$, we have that $O(|V(T_{i-1})|) = O(|X_{i-1}|) = O(i)$. 

Note that in this procedure, some sink vertex w.r.t. certain $\{v_j, x_i^*\}$ may ultimately not be a sink vertex for the final $x_i$-orientation; recall the figure above this lemma. So we may associate generating vertex information for some non-sink vertices. 

\paragraph{Proof of Lemma \ref{lem:sinkexists}.}
Consider the ordering of leaf vertices of $T_{i-1}$, denoted by $\{ v_1, \ldots, v_\ell \}$, that gives rise to the $x_i$-orientation $\vec T_{i-1}$. 
Following the procedure laid out in the proof of Lemma \ref{lem:comp-orientation}, 
after processing $v_1$, we obtained the $(x_i, v_1, x_i^*)$-orientation along the path $P_{v_1x_i^*}$ (from $v_1$ to $x_i^*$ in $T_{i-1}$). Hence at this point, there must be a sink w.r.t. $\{v_1, x_i^*\}$. If this is a sink edge, since this edge is masked, it remain unoriented throughout the entire procedure; that is, this edge will still be a sink edge for the final $x_i$-orientation $\vec T_{i-1}$. 
Hence $\vec T$ has at least one sink, and we are done. 

Otherwise, consider the sink vertex $w_1$ w.r.t. $\{v_1, x_i^*\}$: Since $w_1$ is not a sink vertex for $\vec T_{i-1}$, there must exist an outgoing edge $(w_1, u)$ oriented towards $u$. 
In other words, the orientation is away from the root $x_i^*$. 
Follow any outgoing edge from $u$ and continue in this manner: we have to either meet a sink edge / vertex (at which point we cannot find another outgoing edge to continue), or we reach a leaf $v_j$ of $T_{i-1}$. All edges we visited during this process are oriented away from the root $x_i^*$. 
Since when we start with each $v_j$, we always first orient edges towards the root $x_i^*$, this means such a path cannot lead us to a leaf $v_j$, and thus we have to stop at a sink of $\vec T_{i-1}$. 
Putting everything together, $\vec T_{i-1}$ has at least one sink. 

\paragraph{Proof of Theorem \ref{thm:outlierRd}.}
We claim that the output outlier set $\widehat{\oset}$ of Algorithm \ref{alg:outlierRd} stated in the main text $2$-approximates the minimum outlier $d$-embedding of $(X, \rho)$. 
Specifically, let $\oset^*$ be an optimal outlier set. We will show: 
(a) $(X \setminus \widehat{\oset}, \rho)$ is $d$-embeddable; and 
(b) $|\oset^*| \le |\widehat \oset| \le 2|\oset^*|$. 

To prove (a), let $\widehat Y_d$ be the $d+1$ points in {\sf (Step-0)} that gives rise to $\widehat \oset$, and $\widehat d (\le d)$ the embedding dimension of $(\widehat Y_d, \rho)$. Set $Z_d  = X \setminus (\oset \cup \widehat Y_d)$. 
{\sf (Step-3)} ensures that for any two points $z, z' \in Z_d$, $(\widehat Y_d \cup \{z, z' \}, \rho)$ must be $\widehat d$-embeddable. This is because any pair $z, z'$ that violates this condition will form an edge in the graph $G$ constructed in Step-3, and subsequently, the vertex cover will contain at least one point from such a pair. 
Statement (a) then follows from Theorem \ref{thm:EuclideanEmbed}. 

We now prove statement (b). The left inequality is evident. Let $Z^* = X \setminus \oset^*$. Since $(Z^*, \rho)$ is $d$-embeddable, by Theorem \ref{thm:EuclideanEmbed}, there exists $d^*+1$ points, say $Y^* = \{ y_0, \dots, y_{d^*} \}$ with $d^* \le d$ being the embedding dimension of $Z^*$, such that $(Y^*, \rho)$ is strongly $d^*$-embeddable. 

Now consider the time when the algorithm enumerates some $Y_d$ such that $Y_d \subseteq Z^*$ and $Y^*\subseteq Y_d$ in {\sf (Step-0)}; $Y_d = Y^*$ if $d^* = d$. 
In the subsequent {\sf (Step-2)}, any point inserted into $\oset$ violates condition (ii) of Theorem \ref{thm:EuclideanEmbed} and thus must belong to the optimal outlier set $\oset^*$ as well. 
In {\sf (Step-3)}, again by Theorem \ref{thm:EuclideanEmbed} (ii), for any edge $(z, z') \in E$, at least one of its endpoint is necessarily in $\oset^*$. Hence a smallest set of outliers consistent with $Y_d$ would consist of the points removed in {\sf (Step-2)} and the vertex cover of the graph $G$ constructed in {\sf (Step-3)}. 
Since the algorithm computes a $2$-approximation of the vertex cover, it follows that the set $\oset$ at the end of {\sf (Step-3)} contains at most $2 |\oset^*|$ points. 
Since $|\widehat \oset| \le |\oset|$, we then have that $|\widehat \oset| \le 2|\oset^*|$. 

Hence the output outlier set $\widehat \oset$ is a $2$-approximation for the minimum outlier embedding of $(X, \rho)$ to $\mathbb{R}^d$ as claimed. 

Finally, we analyze the time complexity of our algorithm. Steps 1 -- 3 will be executed $O(n^{d+1})$ number of times. 
For each $Y_d$, {\sf (Step-1)} takes time polynomial in $d$ due to the discussion below Theorem \ref{thm:EuclideanEmbed}. {\sf (Step-2)} performs $O(n)$ number of embedding test, each for $O(d)$ points. Hence it takes $O(n d^{O(1)})$ total time. In {\sf (Step-3)}, it takes $O(n^2 d^{O(1)})$ time to construct the graph $G = (Z, E)$. A $2$-approximation of the vertex cover can easily be computed in $O(|V|+|E|)$ time \cite{CLRS}. 
Putting everything together, the running time is $O(n^{d+3} d^{O(1)}) = O(n^{d+3})$, where the big-O notation hides terms polynomial in $d$.

\section{Hardness results}
\label{appendix:sec:hardness}

In this section, we show that the minimum outlier embedding problems into ultrametrics, trees, and Euclidean space are all NP-hard, by reducing the vertex cover problem to them in each case. In fact, it is NP-hard to approximate each of them within $2-\eps$, for any $\eps>0$, unless the unique game conjecture is true. 
For the case of minimum outlier embedding into Euclidean space, we note that our $2$-approximation algorithm from the previous section almost matches the hardness result here. 

All the hardness results are obtained via reduction from vertex cover. The high-level approach is very similar, although the specific reduction is different in each case. Below we present the result for tree metric first. 

\begin{theorem}\label{thm:treemetrichard}
The problem of minimum outlier embedding into a tree metric is NP-hard. 
Furthermore, assuming the Unique Games Conjecture, it is NP-hard to approximate within a factor of $2-\eps$, for any $\eps>0$. 
\end{theorem}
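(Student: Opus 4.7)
The plan is to reduce from Minimum Vertex Cover, which by Khot and Regev \cite{KR08} is NP-hard and, under the Unique Games Conjecture, is NP-hard to approximate within $2-\nu$ for any $\nu>0$. Given a graph $G=(V,E)$, I would construct in polynomial time a metric space $\X_G=(X_G,\rho_G)$ such that the minimum cardinality of an outlier set making $(X_G\setminus \oset,\rho_G)$ a tree metric equals exactly the minimum vertex cover of $G$. The point set will consist of a vertex point $p_v$ for every $v\in V$, together with a controlled family $A$ of auxiliary points, and the distances will be chosen so that
\begin{itemize}
\item[(i)] the restriction of $\rho_G$ to $A$ is a tree metric;
\item[(ii)] for every edge $(u,v)\in E$, some 4-tuple containing both $p_u$ and $p_v$ violates the four-point condition \eqref{eq:four-point};
\item[(iii)] no 4-tuple that avoids both endpoints of every edge of $G$ violates the four-point condition; and
\item[(iv)] no optimal outlier set contains any point of $A$.
\end{itemize}
Under (i)--(iv), an outlier set $\oset$ makes $(X_G\setminus \oset,\rho_G)$ a tree metric if and only if $\oset\cap V$ is a vertex cover of $G$, so the minimum outlier embedding value of $\X_G$ equals the minimum vertex cover of $G$ and the hardness and $(2-\nu)$-inapproximability transfer directly.

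To realize such a metric, I would start from a ``caterpillar'' tree metric on $A\cup\{p_v:v\in V\}$: a long weighted path backbone in $A$, with each $p_v$ attached as a leaf at its own private stem vertex $s_v$ on the backbone. In this base metric every 4-tuple satisfies the four-point condition with some absolute slack $\Delta>0$ determined by the backbone weights. For each edge $(u,v)\in E$ I would then add a small uniform perturbation $\delta \ll \Delta$ to the single distance $\rho(p_u,p_v)$, chosen just large enough to tip the 4-tuple $\{p_u,p_v,s_u,s_v\}$ into violating the four-point condition. Since only a few distances are inflated by $\delta$, triangle inequality is preserved for small enough $\delta$. Property (iv) is enforced via a standard replication device: each auxiliary point in $A$ is replaced by $n+1$ near-coincident copies separated by distance $\tau \ll \delta$, so that removing any single copy still leaves the intended violation witnessed by another copy, while removing all $n+1$ copies costs strictly more than the trivial $n$-point vertex cover upper bound.

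The main technical obstacle is verifying (iii): a single 4-tuple can simultaneously touch several edges of $G$, and in principle the $\delta$-perturbations for different edges could combine on the same 4-tuple and spuriously violate the four-point condition even when no two of its vertex points form an edge. Resolving this is a matter of parameter calibration: a short case analysis over the possible configurations of four vertex points along the backbone bounds the total perturbation contribution on any single 4-tuple by a constant multiple of $\delta$, whereas the base slack is at least $\Delta$. Choosing $\delta$ as a sufficiently small fraction of $\Delta$ then ensures that every edge 4-tuple is tipped into violation while no non-edge 4-tuple is. With this calibration the exact correspondence between vertex covers of $G$ and tree-restoring outlier sets of $\X_G$ is established, completing the reduction and giving both the NP-hardness and the conditional $(2-\nu)$-inapproximability.
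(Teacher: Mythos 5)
Your high-level plan is the same as the paper's: reduce from Vertex Cover via \cite{KR08}, build a base tree metric with a designated point per vertex of $G$, perturb one distance per edge of $G$ so that a 4-tuple containing both endpoints violates \eqref{eq:four-point}, and then argue that minimum outlier sets correspond to minimum vertex covers. That much is right. But the specific construction you chose (a caterpillar backbone plus a replication gadget) introduces complications that you do not resolve, and the paper's construction shows they are avoidable. Concretely, the paper uses a star: a center $o$ and, for each $v_i\in V$, a private two-point gadget $\{x_i,y_i\}$ hanging off $o$, with $d_X(x_i,x_j)=4-\nu$ exactly when $(v_i,v_j)\in E$ and $d_X(x_i,x_j)=4$ otherwise. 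Removing $\{x_i:v_i\in C^*\}$ for a vertex cover $C^*$ leaves \emph{exactly} the unperturbed star-tree metric on the remaining points, so there is nothing to check in your step (iii) — no case analysis over 4-tuples, no calibration of $\delta$ against a slack $\Delta$. And your step (iv) with $n{+}1$ near-coincident copies is also unnecessary: the paper simply defines $\hat V=\{v_i:x_i\in\oset^*\text{ or }y_i\in\oset^*\}$, which satisfies $|\hat V|\le|\oset^*|$ automatically (outliers that are not gadget points, such as $o$, contribute nothing), and then shows $\hat V$ is a vertex cover by exhibiting the violating 4-tuple $\{x_i,y_i,x_j,y_j\}$ for any uncovered edge. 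No replication needed.

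Beyond being more complicated than necessary, your proposal has genuine unfilled gaps that you yourself flag. The ``short case analysis'' showing that no non-edge 4-tuple is tipped into violation is never carried out, and on a caterpillar backbone several $\delta$-perturbations can in fact load onto one 4-tuple, so the claim that the total perturbation is bounded by a constant multiple of $\delta$ needs an actual argument. The replication device is also not obviously sound: a cluster of $n{+}1$ points at mutual distance $\tau$ is itself a metric that must be checked to remain tree-metric compatible with the rest of the caterpillar (and to still satisfy the triangle inequality after the $\delta$-perturbations), and the assertion that removing all copies ``costs strictly more than the trivial $n$-point upper bound'' is an inequality that should be stated and proved, not asserted. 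None of these steps is necessarily wrong, but as written they are placeholders, not proofs. Switching to the paper's star gadget makes all of them disappear.
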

\begin{proof}
We use a reduction from Vertex Cover.
This problem is known to be NP-hard.
Furthermore, assuming the Unique Games Conjecture, it is also hard to approximate within a multiplicative factor of $2-\eps$ for any $\eps>0$ \cite{KR08}.

Let $G=(V,E)$ be an instance to Vertex Cover.
The goal is to find the smallest subset of vertices of $V$ covering all edges in $E$. 
We construct a discrete metric space $(X, d_X)$ as follows: Let $\nu$ be an arbitrarily small positive number. 
\begin{itemize}\denselist
\item For any $v_i \in V$, add two nodes $x_i, y_i \in X$. Set $X = \{ x_i, y_i \}_{i\in [1, n]} \cup \{ o\}$. 
\item Set $d_X(x_i, y_i) = 1$, $d_X(o, x_i) = 2$, and $d_X(o, y_i) =1$ for each $i\in [1, n]$. Set $d_X(y_i, y_j) = 2$ and $d_X(x_i, y_j) = 3$ for $i \neq j$. 
\item Set $d_X(x_i, x_j) = 4 - \nu$ if $(v_i, v_j) \in E$ is an edge in $E$; otherwise, set $d_X(x_i, x_j) = 4$ if $(v_i, v_j) \notin E$. 
\end{itemize}
It is easy to verify that the above description indeed specifies a metric space.
Intuitively, $(X, d_X)$ is close to the tree metric induced by a star-shaped tree: $o$ is the root, tree edges are $(o, x_i)$ and $(x_i, y_i)$, for $i\in [1, n]$, and each edge has weight $1$. 
Specifically, we claim that the optimal solution to minimum outlier embedding of $(X, d_X)$ into a tree-metric has the same size as the optimal vertex cover for $G=(V, E)$. 

Indeed, let $\oset^* \subset X$ be the smallest set of outliers that one needs to remove so that the metric space restricted to $X \setminus \oset^*$ is a tree metric. 
Let $C^* \subset V$ denote the optimal vertex cover for $G = (V, E)$. 
It is easy to see that $|\oset^*| \le |C^*|$, as removing the nodes $x_i$ corresponding to vertices in $C^*$ will result in a tree metric induced by the star rooted at $o$ described above. 
We next show that $|C^*| \le |\oset^*|$. 

Given $\oset^*$, construct a subset $\hat{V} \subset V$ as $\hat{V} = \{ v_i \in V \mid x_i \in \oset^* $ or $y_i \in \oset^* \}$. By construction, $|\hat V| \le |\oset^*|$. 
We now show that $\hat{V}$ forms a vertex cover for $G=(V, E)$, which in turn implies $|C^*| \le |\oset^*|$. 
Indeed, suppose there exists an edge $(v_i, v_j) \in E$ such that $v_i \notin \hat{V}$ and $v_j \notin \hat{V}$. 
This means that $x_i, y_i, x_j, y_j \in X \setminus \oset^*$. 
However, it is easy to check that these four points violate the four-point condition as in \eqref{eq:four-point}, as $d_X(x_i, y_j) + d_X(x_j, y_i) = 6$, which is strictly larger than $d_X(x_i, x_j) + d_X(y_j, y_i) = 6 - \nu$, as well as larger than $d_X(x_i, y_i) + d_X(y_j, x_j) = 4$. Contradiction. Hence either $v_i$ or $v_j$ must be included in $\hat{V}$. Hence $|C^*| \le |\hat{V}| \le |\oset^*|$. 

Combining $|\oset^*|\le |C^*|$ and $|C^*| \le |\oset^*|$, we have $|C^*| = |\oset^*|$. The claimed hardness results, both for computing and for approximating the tree-metric outlier-embedding problem thus follow from the hardness of Vertex Cover. 
\end{proof}

\begin{theorem}\label{thm:ultrametrichard}
The problem of minimum outlier embedding into an ultrametric is NP-hard. 
Furthermore, assuming the Unique Games Conjecture, it is NP-hard to approximate within a factor of $2-\eps$, for all $\eps>0$.
\end{theorem}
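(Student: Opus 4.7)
The plan is to mirror the proof of Theorem~\ref{thm:treemetrichard} and give a reduction from Vertex Cover, with the three-point condition \eqref{eq:ultrametric} taking the role previously played by the four-point condition. Given an instance $G=(V,E)$ with $V=\{v_1,\dots,v_n\}$ and a small parameter $\nu>0$, I would construct a metric space $(X,d_X)$ on $2n$ points $X=\{x_i,y_i\}_{i=1}^n$ by setting $d_X(x_i,y_i)=1$ for each $i$, $d_X(y_i,y_j)=d_X(x_i,y_j)=2$ for all $i\neq j$, and $d_X(x_i,x_j)=2-\nu$ if $\{v_i,v_j\}\in E$ while $d_X(x_i,x_j)=2$ otherwise. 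Intuitively $(X,d_X)$ is a tiny perturbation of the ultrametric realized by a depth-one rooted tree whose leaves are the $y_i$, with each $x_i$ hanging off its partner $y_i$; the perturbation takes place only on the $x$-pairs that correspond to edges of $G$.

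The first routine step is to verify that $(X,d_X)$ really is a metric. All pairwise distances lie in $\{1,2-\nu,2\}$, and the only distance strictly below $2-\nu$ is the value $1$ appearing on the pairs $\{x_i,y_i\}$, so the triangle inequality follows by a short case analysis for any $\nu\in(0,1)$. Next I would establish the key equivalence: the minimum outlier ultrametric embedding of $(X,d_X)$ uses exactly $|C^*|$ outliers, where $C^*$ is a minimum vertex cover of $G$. In one direction, removing $\{x_i:v_i\in C^*\}$ eliminates every distance-$(2-\nu)$ entry (since $C^*$ covers all edges), so the remaining distances lie in $\{1,2\}$; a short case analysis then confirms that \eqref{eq:ultrametric} holds on every surviving triple. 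Conversely, given any outlier set $\oset^*$ with $(X\setminus\oset^*,d_X)$ an ultrametric, set $\hat V=\{v_i:x_i\in\oset^*\ \text{or}\ y_i\in\oset^*\}$, so $|\hat V|\leq|\oset^*|$. If some edge $\{v_i,v_j\}\in E$ were uncovered by $\hat V$, then $x_i,y_i,x_j,y_j$ all survive, and the triple $\{x_i,y_i,x_j\}$ has distances $1,\ 2,\ 2-\nu$ whose largest value $2$ strictly exceeds the second largest $2-\nu$, violating \eqref{eq:ultrametric}. Hence $\hat V$ is a vertex cover, so $|C^*|\leq|\hat V|\leq|\oset^*|$, and combined with the reverse inequality we get $|C^*|=|\oset^*|$.

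The step that requires the most care is the forward direction, in which one must confirm that \emph{no} surviving triple can violate \eqref{eq:ultrametric} after removing the $x_i$'s indexed by $C^*$. The nontrivial case is a triple $\{x_i,x_j,x_k\}$ in which $v_i$ is incident in $G$ to both $v_j$ and $v_k$ while $\{v_j,v_k\}\notin E$: were all three of these $x$-points kept, the distance profile $(2-\nu,2-\nu,2)$ would break ultrametricity. But any vertex cover of $G$ must include either $v_i$ or both of $v_j,v_k$, so the offending triple cannot survive the removal. Once these "cherry" triples are handled, the equivalence $|C^*|=|\oset^*|$ is complete; NP-hardness then follows from the NP-hardness of Vertex Cover, and $(2-\eps)$-hardness of approximation for every $\eps>0$ follows from the UGC-based hardness of Vertex Cover established in \cite{KR08}.
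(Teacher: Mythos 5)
Your proposal is correct and follows essentially the same approach as the paper: a reduction from Vertex Cover that attaches a pair of points to each vertex, perturbs the ``$x_i$--$x_j$'' distances on edges by a small $\nu$, removes one point per cover vertex in the easy direction, and exhibits a violating triple of the form $\{x_i,y_i,x_j\}$ in the converse direction. The only difference from the paper's proof of Theorem~\ref{thm:ultrametrichard} is a cosmetic rescaling of the distance values (the paper uses $\{2\nu,\,1-\nu,\,1\}$ where you use $\{1,\,2-\nu,\,2\}$), which does not change the argument.
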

\begin{proof}
As for the case of embedding into trees, we give a reduction from Vertex Cover. 
Specifically, let $G = (V, E)$ be an instance of Vertex Cover.
We construct a discrete metric space $(U, d_U)$ as follows: Let $\nu$ be a sufficiently small positive number. 
\begin{itemize}\denselist
\item For each $v_i \in V$, $i\in [1, n]$, add a pair of nodes $u_i, w_i \in U$. Hence $U = \{u_i, w_i \}_{i\in [1,n]}$. 
\item Set $d_U(u_i, w_i) = 2\nu$, and $d_U(u_i, w_j) = 1$, $d_U(w_i, w_j) =1$ for any $i\neq j$. For $i\neq j$, set $d_U(u_i, u_j) = 1-\nu$ if $(v_i, v_j) \in E$; otherwise, set $d_U(u_i, u_j) = 1$. 
\end{itemize}
It is easy to verify that the above description indeed specifies a metric. 
Let $\oset^*$ denote the optimal set of outliers so that $d_U$ restricted to $U \setminus \oset^*$ gives rise to an ultrametric. 
Let $C^*$ denote the optimal vertex cover. 
Similar to the proof of Theorem \ref{thm:treemetrichard}, we will show that $|C^*| = |\oset^*|$. 
First, it is easy to see that by removing those vertices $u_i$ such that $v_i \in C^*$, the resulting metric is an ultrametric. Hence $|\oset^*| \le |C^*|$. 
It remains to show that $|C^*| \le |\oset^*|$. 

To this end, construct $\hat{V} = \{ v_i \in V \mid u_i \in \oset^* $ or $w_i \in \oset^* \}$; obviously, $|\hat{V}|\le |\oset^*|$. 
We claim that $\hat{V}$ forms a vertex cover for $G=(V,E)$. Specifically, assume this is not the case and there exists an edge $(v_i, v_j)\in E$ such that $v_i \notin \hat{V}$ and $v_j \notin \hat{V}$. 
Then $u_i, w_i, u_j \in U \setminus \oset^*$. 
However, these three points violate the three-point condition for ultrametrics as in \eqref{eq:ultrametric}, as $d_U(w_i, u_j) = 1$, which is strictly larger than $\max\{ d_U(w_i, u_i) = 2\nu, d_U(u_j, u_i) = 1-\nu \}$, which is a contradiction. Hence $\hat{V}$ is a vertex cover and thus $|C^*| \le|\hat{V}|\le |\oset^*|$. 
Overall, $|C^*| = |\oset^*|$ and the theorem then follows. 
\end{proof}

We next present a hardness result for outlier embedding into constant-dimensional Euclidean space. 

\begin{theorem}
\label{thm:linemetrichard}
For any $d\geq 2$, 
the problem of minimum outlier embedding into $d$-dimensional Euclidean space is NP-hard. 
Furthermore, assuming the Unique Games Conjecture, it is NP-hard to approximate within a factor of $2-\eps$, for any $\eps>0$.
\end{theorem}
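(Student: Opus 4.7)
The plan is to reduce from Vertex Cover, following the template of Theorems \ref{thm:treemetrichard} and \ref{thm:ultrametrichard}. Given a graph $G = (V, E)$ with $V = \{v_1, \ldots, v_n\}$, the reduction will produce a metric $(X, d_X)$ of size $O(n)$ whose minimum outlier embedding into $\mathbb{R}^d$ has size equal to the minimum vertex cover of $G$. Fix a small parameter $\nu > 0$, and choose $M = 2n + d + 1$ anchor positions $\alpha_0, \ldots, \alpha_{M-1} \in \mathbb{R}^d$ in general position, so that every $(d+1)$-subset is affinely independent. For each $v_i \in V$, pick a pair $p_i, p'_i \in \mathbb{R}^d$ with $\|p_i - p'_i\| = \nu$, placed so that the $p_i$'s are well-separated from each other and far from the anchors. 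Let $X = \{a_0, \ldots, a_{M-1}\} \cup \{x_i, y_i\}_{i=1}^n$, identify each point with its ideal location $a_k \leftrightarrow \alpha_k$, $y_i \leftrightarrow p_i$, $x_i \leftrightarrow p'_i$, and set every pairwise distance to the Euclidean distance between the corresponding ideal locations, with one exception: for each edge $(v_i, v_j) \in E$ set $d_X(x_i, x_j) = \|p'_i - p'_j\| - \nu^*$ for a sufficiently small $\nu^* > 0$. Triangle inequalities then hold because $\nu^*$ is much smaller than the slack in all relevant Euclidean triangles.

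For the equivalence $|\oset^*| = |C^*|$: in one direction, removing $\{x_i : v_i \in C^*\}$ eliminates every perturbed distance and leaves a metric realized verbatim by the Euclidean embedding at ideal locations; hence $|\oset^*| \leq |C^*| \leq n$. For the other direction, define $\hat V = \{v_i : x_i \in \oset^* \text{ or } y_i \in \oset^*\}$, so $|\hat V| \leq |\oset^*|$. The claim is that $\hat V$ is a vertex cover. Otherwise, some edge $(v_i, v_j) \in E$ has $v_i, v_j \notin \hat V$, and all four of $x_i, y_i, x_j, y_j$ survive in $X \setminus \oset^*$. Because $|\oset^*| \leq n < M - d$, at least $d+1$ anchors also survive, and by general position an affinely independent subset $A'$ of $d+1$ such anchors exists. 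In any putative $d$-embedding of $X \setminus \oset^*$ into $\mathbb{R}^d$, the images of $A'$ form an isometric copy of $\{\alpha_k : a_k \in A'\}$; the distances from $x_i$ and $x_j$ to these $d+1$ anchors then uniquely determine $x_i$ at the image of $p'_i$ and $x_j$ at the image of $p'_j$. This forces $\|x_i - x_j\| = \|p'_i - p'_j\|$, contradicting the prescribed value $\|p'_i - p'_j\| - \nu^*$. Hence $\hat V$ is a vertex cover, so $|C^*| \leq |\hat V| \leq |\oset^*|$.

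The main obstacle is making the reduction robust to the adversary's removal of anchors: with too few anchors remaining, $x_i$ would cease to be uniquely located and a reflected placement might rescue the embedding. The fix is \emph{overprovisioning}: with $M = 2n + d + 1$ anchors in general position, any outlier set of size at most $2n$ still leaves $d+1$ affinely independent anchors, preserving the rigidity argument. This safety margin also drives the $(2 - \eps)$-hardness: a hypothetical $(2 - \eps)$-approximation for outlier embedding produces $\oset$ with $|\oset| \leq (2 - \eps) |\oset^*| \leq (2 - \eps) n < 2n$, so the same rigidity argument shows that the corresponding $\hat V$ is a vertex cover of size at most $(2 - \eps) |C^*|$, contradicting the UGC-hardness of $(2 - \eps)$-approximating Vertex Cover~\cite{KR08}. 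The requirement that distances to $d+1$ affinely independent anchors uniquely pin down a point is precisely why the argument needs $d \geq 2$: in $\mathbb{R}^1$ the two available anchors leave a reflection ambiguity that this construction does not rule out.
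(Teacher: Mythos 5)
Your reduction is correct and gives the right conclusion, but it uses a genuinely different gadget than the paper. The paper's construction has no global anchor pool: for $d=2$ it places, per vertex $v_i$, a triple $x_i=(i,-1)$, $y_i=(i,1)$, $z_i=(i,0)$, keeps all distances Euclidean except that $d_X(z_i,z_j)$ is shrunk by $\nu$ for edges, and then uses the three surviving points $\{x_i,y_i,x_j\}$ as the affinely independent frame that rigidly pins down $z_i$ and $z_j$; for general $d$ the triple is replaced by a canonical $(d-1)$-simplex plus its centroid sitting in the slice $\mathbb{R}^{d-1}\times\{i\}$, with the centroid playing the role of the perturbed point. Your construction instead introduces a dedicated pool of $M=2n+d+1$ anchors in general position and argues via overprovisioning that at least $d+1$ of them must survive, which then pins down every surviving $x_i$. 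Both routes work; the paper's has the feature that the rigid frame is assembled from the very gadget points of the two endpoints of an edge (so no separate anchor set or counting argument is needed), while yours decouples the rigidity infrastructure from the per-vertex gadgets, which makes the survival argument a one-line counting step and generalizes to any $d$ without redesigning the gadget. Incidentally, the points $y_i$ in your construction are inert: they carry no perturbed distance and are never needed to pin anything down, so you could drop them and define $\hat V=\{v_i: x_i\in\oset^*\}$.

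One genuine error in a side remark: your final sentence claims the argument needs $d\ge 2$ because in $\mathbb{R}^1$ ``two available anchors leave a reflection ambiguity.'' That is not so; in $\mathbb{R}^1$ two distinct points are affinely independent and span the whole line, so distances to both of them pin a point down uniquely (no reflection remains once you use $d+1=2$ affinely independent anchors). The actual obstruction at $d=1$ is different and it sits in the step you waved through: you cannot place points ``in general position'' on a line, every triple is collinear, and for a collinear triple $a,b,c$ with $b$ between $a$ and $c$ the triangle inequality $d(a,c)\le d(a,b)+d(b,c)$ is tight, leaving no slack at all to absorb the $\nu^*$ perturbation. So for $d=1$ the perturbed $d_X$ fails to be a metric, which is why the construction requires $d\ge 2$. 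You should replace that closing sentence accordingly.
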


\begin{proof}
We first present the construction for $d=2$, and then discuss how to extend it to the case $d>2$.
As before, we give a reduction from Vertex Cover. 
Specifically, given a graph $G = (V, E)$, we construct a discrete metric space $(X, d_X)$ as follows: Let $\nu$ be a sufficiently small positive real number. 
Set $X = \{x_1, \ldots, x_n, y_1, \ldots, y_n, z_1, \ldots, z_n \}$. 
\begin{itemize}\denselist
\item 
We define:
\begin{enumerate}\denselist
\item for any $i\in [1, n]$, $x_i = (i, -1), y_i = (i, 1), $and $z_i = (i, 0)$.
\end{enumerate}
\item We assign distances $d_X$ as follows: 
\begin{enumerate}\denselist
\item $d_X(x_i, z_i) = 1$, $d_X(y_i, z_i) = 1$ and $d_X(x_i, y_i) =2$, for any $i \in [1, n]$.
\item For any $i, j \in [1, n]$, for $a, b \in \{x, y\}$, $d_X(a_i, b_j) = \|a_i - b_j \|$. 
\item For any $i, j \in [1,n]$, $d_X(z_i, x_j) = \| z_i - x_j \|$, $d_X(z_i, y_j) = \|z_i - y_j\|$. 
\item For $i \neq j$, if $(v_i, v_j) \in E$, then $d_X(z_i, z_j) = |j - i| - \nu$; otherwise, set $d_X(z_i, z_j) = |j - i|$. 
\end{enumerate}
\end{itemize}
In other words, other than $z_i$ and $z_j$ potentially deviates from the distance given by their coordinates, the distance between all  other pairs of points are consistent with their coordinates. It is easy to verify that $(X, d_X)$ as described above satisfies triangle inequality and is a valid metric. 
Let $\oset^*$ denote the optimal set of outliers so that $d_X$ restricted to $X \setminus \oset^*$ can be embedded in $\mathbb{R}^2$ isometrically. 
Let $C^*$ denote the optimal vertex cover for graph $G = (V, E)$. 
We now show $|\oset^*| = |C^*|$. 

Specifically, first note that given $C^*$, we can construct the set $\oset$ containing all $y_i$ such that $v_i \in C^*$. It is easy to see that the coordinates for the remaining points $X \setminus \oset$ are consistent with their pairwise distances. Hence $|\oset^*| \le |\oset| = |C^*|$. 
We next show that $|C^*| \le |\oset^*|$. 

Consider $\oset^*$. We construct the set $\hat{V} = \{ v_i \mid x_i \in \oset^* $ or $y_i \in \oset^*$ or $z_i \in \oset^* \}$. 
We now argue that $\hat V$ is a vertex cover for the graph $G = (V, E)$. 
Assume that this is not true, and that there exists an edge $(v_i, v_j) \in E$ such that $v_i \notin \hat V$ and $v_j \notin \hat V$. 
This means that the 6 points in $\{ x_i, y_i, z_i, x_j, y_j, z_j\}$ must all belong to $X \setminus \oset^*$. However, we argue that their pairwise distances are not consistent.
In particular, consider the 3 points $x_i, y_i, x_j$. They are isometrically embeddable into $\mathbb{R}^2$ (but not into $\mathbb{R}^1$). Hence once their coordinates are fixed, there is only a unique position possible for any other point $p$ given its distance to these three points. In other words, we can uniquely embed $z_i$, as well as $z_j$ into $\mathbb{R}^2$ w.r.t.~the current coordinates of $\{x_i, y_i, x_j \}$ (which form a valid isometry consistent with $d_X$). 
However, the Euclidean distance between $z_i$ and $z_j$ is not consistent with $d_X$. Hence the points $\{x_i, y_i, x_j, z_i, z_j\}$ are not isometrically embeddable into $\mathbb{R}^2$, which contradicts our assumption that all of them are in $X\setminus \oset^*$. 
It then follows that $\hat V$ has to be a vertex cover, and $|C^*| \le |\hat V| \le |\oset^*|$. 
This completed the proof for $d=2$.

We next discuss how the above construction can be generalized to the case $d>2$.
In the construction of $(X, d_X)$ we map each $v_i \in V$ to a sequence of $d+1$ points $x^{(0)}_i, \ldots, x^{(d)}_i$ in $X$ contained in the subspace $\mathbb{R}^{d-1}\times \{i\}$, where the first $d$ points form a canonical $(d-1)$-simplex,
while the last point $x^{(d)}_i$ is the centroid of this simplex.
The rest of the argument remains essentially identical so we omit the details.
\end{proof}

\section{Bi-criteria approximation for embedding into $\mathbb{R}^d$}
\label{appendix:sec:bicriteriaRd}
\newcommand{\mynormalized}	{normalized\xspace}

As before, let $d_E$ denote the Euclidean distance in $\mathbb{R}^d$. Given a point $x\in \mathbb{R}^d$ and a set $A \subset \mathbb{R}^d$, the distance from $x$ to $A$ is defined as $d(x, A) = \inf_{y\in A} d_E(x, y)$. 
\begin{definition}[Near-isometric $d$-embedding]
Given a metric space $\X = (X, \rho_X)$, we say that a map $\phi: X \to \mathbb{R}^d$ is a \emph{$\delta$-near-isometric embedding of $\X$ into $\mathbb{R}^d$} if for any $x, x'\in X$, we have that $| \rho(x, x') - d_E(\phi(x), \phi(x')) | \le \delta$. 
In this case, we also say that $\X$ admits a $\delta$-near-isometric $d$-embedding, and $(\phi(\X), d_E)$ is a $\delta$-approximation of $\X$. 
\end{definition}
We note that the parameter $\delta$-above is not normalized w.r.t.~the diameter of the set; this is for the convenience of the latter argument in this section. 
Before we describe our algorithm, we need some more definitions (following the notations of \cite{ATV01}). 

\begin{definition}[Normalized sequence]
Given an ordered sequence of points $U = \{ u_0, u_1, \ldots, u_d \} \subset \mathbb{R}^d$, 
we say that $U$ is \emph{\mynormalized} if $u_0 = \origin = (0, \ldots, 0)$, 
and if $u_i \in \mathbb{R}^i_+$ for all $1 \le i \le d$. 

Given $U \subseteq A \subset \mathbb{R}^d$, if $f: A \to \mathbb{R}^d$ is a map such that $f(U)$ is \mynormalized{}, then we say that the map $f$ is \emph{\mynormalized{} at $U$}. 
\end{definition}

\begin{definition}[Maximal sequence]
Let $k\leq d$.
Given an ordered sequence of points $U = \{ u_0, \ldots, u_k \}$ from a set $A \subset \mathbb{R}^d$, we say that $U$ is \emph{maximal} in $A$ if: (i) $d_E(u_0, u_1) = \diam(A)$, and (ii) for any $2 \le i \le k$, the distance $d(u_i, \mathrm{aff}\{u_0, \ldots, u_{i-1} \})$ is maximal in $A$. 
Here, $\mathrm{aff} B$ is the affine subspace of $\mathbb{R}^d$ spanned by points in set $B \subseteq \mathbb{R}^d$. 
\end{definition}

\paragraph{Description of the algorithm.} 
Assume that the input $n$-point metric space $\X = (X, \rho)$ admits a $(\eps, k^*)$-\noembedding{} into $\mathbb{R}^d$. 
We now present an algorithm, described in Algorithm \ref{alg:bicriteriaRd}, to find an approximation of such an embedding. 
The high level structure of this algorithm parallels that of Algorithm \ref{alg:outlierRd}, which intuitively is an algorithm for the special case when $\eps = 0$. 
However, the technical details here are more involved so as to tackle several issues caused by the near-isometric embedding. 
In this algorithm, $C_d$ is a positive constant depending only on the dimension $d$. 

\begin{algorithm}[htbp]
\caption{Bi-criteria approximation for embedding into $\mathbb{R}^d$. \label{alg:bicriteriaRd}}
\begin{algorithmic}
\Require An $n$-point metric space $(X, \rho)$, a parameter $0 \le \eps <1$, dimension $d$ 
\Ensure A set of outliers $\widehat{\oset} \subset X$ \\
~~~~~~~~~~~~Initialize the set of candidate outlier sets $\C$ to be empty. 
\State {\sf (Step-0)}~~For each subset of $d+1$ distinct points $Y_d = \{y_0, \ldots, y_d \} \subset X$, perform the following: \\
~~~~~~~~~~~~Initialize sets $W$ and $\oset$ to be the empty set. 
\begin{description}\denselist
\item ~~~~{\sf (Step-1)}~~Compute an \mynormalized{} sequence $P_d = \{p_0, p_1, \ldots, p_d \} \subset \mathbb{R}^d$ such that $(P_d, d_E)$ is an $(\eps \diam(X))$-approximation of $(Y_d, \rho)$. 
If such a $P_d$ does not exist, return to (Step-0). 
Otherwise, compute the smallest dimension $d' \le d$ for which such a $P_d$ exists. 
\item ~~~~{\sf (Step-2)}~~For each remaining point $x \in X \setminus Y_d$, compute $p_x \in \mathbb{R}^{d'}$ satisfying that $|\rho(x, y_i) - d_E(p_x, p_i)| \le C_d \sqrt{\eps}\diam({X}) + \eps \diam(X)$ for any $i\in [0, d]$. 
If such a $p_x$ can be found, insert the pair $(x, p_x)$ into the set $W$; otherwise, insert $x$ to the outlier set $\oset$.
\item ~~~~{\sf (Step-3)}~~Construct the graph $G = (W, E)$ where, for any two $w = (z, p), w' = (z', p') \in W$, $(w, w') \in E$ if and only if $| \rho(z, z') - d_E(p, p') |> C_d \sqrt{\eps} \diam(X) + \eps\diam(X)$. Compute a $2$-approximation $W' \subset W$ of the vertex cover of $G$. Set $\oset = \oset \cup \{ z \mid (z, p) \in W' \}$, and add the set $\oset$ to the collection of candiate outlier sets $\C$.
\end{description}
{\sf (Step-4)} Let $\widehat \oset$ be the set from $\C$ with smallest cardinality. We return $\widehat \oset$ as the outlier set. 
\end{algorithmic}
\end{algorithm}

For the time being, for the sake of simplicity of presenting the main ideas, let us assume that we can implement {\sf (Step-1,2)} (i.e, the computation of $P_d$ and $p_x$)) in polynomial time. We will address how to modify Algorithm \ref{alg:bicriteriaRd} to achieve this later in this section. 
Our main result is the following: 
\begin{theorem}\label{thm:bicriteriaRd}
Given an $n$-point metric space $\X = (X, \rho)$, if $\X$ admits a $(\eps, k^*)$-\noembedding{} in $\mathbb{R}^d$, then Algorithm \ref{alg:bicriteriaRd} returns an outlier set $\widehat{\oset}$ that witnesses an $(O(\sqrt{\eps}), |\widehat{\oset}|)$-\noembedding{} of $\X$ in $\mathbb{R}^d$ with $|\widehat{\oset}| \le 2k^*$. Here $O(\sqrt{\eps})$ hides constants depending on the dimension $d$. 
\end{theorem}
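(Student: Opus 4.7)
The plan is to parallel the analysis of Algorithm \ref{alg:outlierRd} as carried out in Theorem \ref{thm:outlierRd}, but with every distance comparison softened by the approximation slack $C_d\sqrt{\eps}\diam(X)+\eps\diam(X)$. I would split the argument into two claims: (a) the returned set $\widehat{\oset}$ witnesses an embedding of $X\setminus\widehat{\oset}$ into $\mathbb{R}^d$ with $\ell_\infty$ distortion $O(\sqrt{\eps})\diam(X)$, and (b) $|\widehat{\oset}|\le 2k^*$.

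Claim (a) is essentially by construction. Let $\widehat{Y}_d$ be the seed that produced $\widehat{\oset}$ and let $\widehat{P}_d$ be its normalized approximant from Step-1. Every surviving non-seed point $z$ is assigned a position $p_z$ satisfying $|\rho(z,y_i)-d_E(p_z,p_i)|\le C_d\sqrt{\eps}\diam(X)+\eps\diam(X)$ for all seed points, while the vertex-cover deletion in Step-3 guarantees that every surviving pair $(z,z')$ satisfies $|\rho(z,z')-d_E(p_z,p_{z'})|\le C_d\sqrt{\eps}\diam(X)+\eps\diam(X)$. Extending by $y_i\mapsto p_i$ then yields an $O(\sqrt{\eps})\diam(X)$-distortion embedding of $X\setminus\widehat{\oset}$.

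The substantive part is claim (b). Fix an optimal outlier set $\oset^*$ with $|\oset^*|=k^*$ and an $\eps\diam(X)$-near-isometric embedding $\phi^*:X\setminus\oset^*\to\mathbb{R}^d$ realising the optimum. Let $d^*\le d$ be the embedding dimension of $\phi^*(X\setminus\oset^*)$ and let $Y^*=\{y_0^*,\ldots,y_{d^*}^*\}\subseteq X\setminus\oset^*$ be a maximal sequence with respect to $\phi^*$; in particular, $\phi^*(Y^*)$ spans the affine hull with well-separated heights. I would then track the iteration of Step-0 on any $Y_d\supseteq Y^*$ with $Y_d\subseteq X\setminus\oset^*$ (padded arbitrarily within $X\setminus\oset^*$ if $d^*<d$) and show: (i) Step-1 succeeds with $d'=d^*$, because the rigid-motion normalization of $\phi^*|_{Y_d}$ is a valid $\eps\diam(X)$-approximation; (ii) every point pushed into $\oset$ by Step-2 already lies in $\oset^*$, because $\phi^*(x)$ itself is a candidate $p_x$ meeting the Step-2 tolerance; (iii) every edge of the graph $G$ constructed in Step-3 has at least one endpoint in $\oset^*$. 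Granting (i)--(iii), $\oset^*$ is a vertex cover of $G$, so the $2$-approximate cover $W'$ has size at most $2|\oset^*|$, and $|\oset|\le 2k^*$ for this iteration; by minimality $|\widehat{\oset}|\le 2k^*$.

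The main obstacle, and the entire reason the distortion blows up from $\eps$ to $\sqrt{\eps}$, is the quantitative rigidity argument needed for (ii) and (iii): once the normalized seed positions $p_0,\ldots,p_{d'}$ are fixed, the positions the algorithm is willing to accept for other points are determined by the seed distances only up to $O(\sqrt{\eps})\diam(X)$ in ambient $\mathbb{R}^{d'}$, even though the seed distances themselves are only $\eps\diam(X)$-off. This comes from the sensitivity of the Cayley--Menger/barycentric coordinate computation with respect to perturbations of a maximal reference frame, and is exactly the type of bound made explicit in \cite{ATV01}; since $Y^*$ was chosen to be maximal within $\phi^*(X\setminus\oset^*)$, the reference frame is sufficiently non-degenerate for a quantitative bound of the form $\|\phi^*(x)-p_x\|\le C_d\sqrt{\eps}\diam(X)$ to hold, and the constant $C_d$ in Algorithm \ref{alg:bicriteriaRd} is chosen so that both the Step-2 tolerance and the Step-3 edge threshold comfortably accommodate this deviation together with the base $\eps\diam(X)$ slack from $\phi^*$. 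Making this rigidity quantitative in a self-contained manner, and verifying that (ii) and (iii) follow with the same constant $C_d$ used by the algorithm, is where the bulk of the technical work lies.
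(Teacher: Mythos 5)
Your proposal follows the paper's proof essentially step for step: the same two-claim decomposition, the same choice of the Step-0 iteration seeded on a maximal sequence $Y^*$ of the optimal embedding $\phi^*$, and the same reliance on the quantitative rigidity bound from \cite{ATV01} (stated as Proposition~\ref{prop:ATV} in the paper) both to explain the $\sqrt{\eps}$ degradation and to show that $\oset^*$ is a vertex cover of the Step-3 graph. The only superficial differences---padding $Y^*$ up to size $d+1$ when $d^*<d$ rather than assuming $d^*=d$ as the paper does for simplicity, and proposing $\phi^*(x)$ directly as the Step-2 witness rather than the paper's $h_x(x)$ (both require the rigidity bound on the seed positions $d_E(\phi^*(y_i),p_i)\le 2c_d\sqrt{\eps}\diam(X)$)---do not change the argument.
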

\paragraph{Proof of Theorem \ref{thm:bicriteriaRd}.} 
Let $\widehat{Z} = X \setminus \widehat{\oset}$.
By {\sf (Step-2)} and {\sf (Step-3)}, for any two points $z, z' \in \widehat Z$, we computed $p_z$ and $p_{z'} \in \mathbb{R}^d$ in (Step-2) such that $|\rho(z, z') - d_E(p_z, p_{z'}) | \le C_d \sqrt{\eps} \diam(X) + \eps \diam(X)$. 
Hence  $(\widehat{Z}, \rho)$ indeed has an $O(\sqrt{\eps} \diam(X))$-near-isometric embedding in $\mathbb{R}^d$. 

It remains to show that $|\widehat{\oset}| \le 2k^*$. 
Let $\oset^*$ be an optimal set of outliers such that $|\oset^*| = k^*$ and $(X \setminus \oset^*, \rho)$ admits a $\eps \diam(X)$-near-isometric $d$-embedding, say $\phi: X\setminus \oset^* \to \mathbb{R}^d$. 
Set $Z^* = X \setminus \oset^*$. 
For simplicity, we assume that there does not exist $d' < d$ such that $(X \setminus \oset^*, \rho)$ admits a $\eps \diam(X)$-near-isometric $d'$-embedding; the case where such $d' < d$ exists can be handled similarly. 

Let $Y^* = \{y_0, \ldots, y_d \}$ be the ordered sequence such that $\phi(Y^*)$ is maximal in $\phi(Z^*)$. We assume w.l.o.g.~that $\phi(Z^*)$ is also \mynormalized{} in $\mathbb{R}^d$; if not, there exists an isometry $T: \mathbb{R}^d \to \mathbb{R}^d$ so that $T(\phi(Y^*))$ is  \mynormalized{}, and we can simply take $\phi$ as $T \circ \phi$. 

Consider the time when our algorithm enumerates $Y_d = Y^*$ in {\sf (Step-0)}. 
In the subsequent {\sf (Step-1)}, the algorithm will be able to find an \mynormalized{} sequence $P_d$ as claimed, since $\phi(Y_d)$ satisfies the requirements. Following our earlier assumption on $Z^*$, $d' = d$ for $P_d$. 
In {\sf (Step-2)}, for a point $x\in X \setminus Z^*$, we claim that there exists $p_x \in \mathbb{R}^d$ such that $(P_d \cup \{ p_x \}, d_E)$ is an $2\eps \diam_X$-approximation of $(Y_d \cup \{x\}, \rho)$. 
To prove this, we need to use the following result proved in \cite{ATV01}. 
\begin{proposition}[Section 2.6, \cite{ATV01}]\label{prop:ATV}
For any fixed $\tilde d > 1$, let $A = \{\origin, u_1, \ldots, u_{\tilde d}, x \} \subset \mathbb{R}^{\tilde d}$ be such that the $\tilde d$-sequence $U = (\origin, u_1, \ldots, u_{\tilde d})$ is \mynormalized{} and maximal in $A$. Let $f: A \to \mathbb{R}^{\tilde d}$ be a map such that $(f(A), d_E)$ is an $\delta \diam(A)$-approximation of $(A, d_E)$, and $f$ is \mynormalized{} at $U$. Then there exists a constant $c_{\tilde d}$ depending only on the dimension $\tilde d$ such that $d_E(x, f(x)) \le c_{\tilde d} \sqrt{\delta} \diam(A)$. 
\end{proposition}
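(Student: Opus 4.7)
The plan is to combine an inductive \emph{rigidity} argument for the frame $U$ with a case analysis driven by the minimum altitude $h_{\min} := \min_{1 \le j \le \tilde d} h_j$, where $h_j := d(u_j, \mathrm{aff}\{u_0,\ldots,u_{j-1}\})$. Maximality of $U$ in $A$ will play the crucial role when some altitude is small, because it transfers a small altitude bound from frame vectors to the exceptional point $x$.

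First I would show that $f$ sends the frame $U$ to an almost identical frame, in the sense that $d_E(u_i, f(u_i))$ is small for every $i$. Since $u_i$ and $f(u_i)$ both lie in $\mathbb{R}^i_+$ (by normalization of $U$ and of $f$ at $U$) and their pairwise distances agree up to $\delta\,\diam(A)$, the coordinates are essentially pinned down. A direct induction on $i$ works: the base case uses that $u_1$ and $f(u_1)$ both lie on the positive $x_1$-axis with $|d_E(\origin, u_1) - d_E(\origin, f(u_1))| \le \delta\,\diam(A)$; the inductive step expresses the coordinates of $u_{i+1}$ as a function of its distances to $u_0,\ldots,u_i$, with the last coordinate chosen positive, and the Jacobian of this map has operator norm controlled by $O(1/h_{i+1})$. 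The resulting bound has the form $d_E(u_i, f(u_i)) = O_{\tilde d}(\delta\,\diam(A)^2/h_{\min})$ whenever $h_{\min}$ is not too small.

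Second I would handle $x$ via a dichotomy on $h_{\min}$, with threshold $\eta := C\sqrt{\delta}\,\diam(A)$ for a suitable $\tilde d$-dependent constant $C$. In the ``large altitude'' case $h_{\min} \ge \eta$, the rigidity argument applied to $x$---whose coordinates in $\mathbb{R}^{\tilde d}$ are determined, up to sign of the last coordinate, by its distances to $u_0,\ldots,u_{\tilde d}$---yields $d_E(x, f(x)) = O_{\tilde d}(\delta\,\diam(A)^2/\eta) = O_{\tilde d}(\sqrt{\delta}\,\diam(A))$. In the ``small altitude'' case $h_{\min} < \eta$, let $i^*$ realize the minimum; maximality of $U$ in $A$ implies that $d(y, \mathrm{aff}\{u_0,\ldots,u_{i^*-1}\}) \le h_{\min} < \eta$ for every $y \in A$, in particular for $y = x$. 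The near-isometry of $f$ propagates this bound to the image side, so $f(x)$ lies within $O(\eta)$ of $\mathrm{aff}\{f(u_0),\ldots,f(u_{i^*-1})\}$; by the first stage, that latter affine hull is itself $O(\sqrt{\delta}\,\diam(A))$-close to $\mathrm{aff}\{u_0,\ldots,u_{i^*-1}\}$. Decomposing $x - f(x)$ into components parallel and perpendicular to this low-dimensional affine subspace, and bounding each using that $d_E$-distances from $x$ and $f(x)$ to $u_0,\ldots,u_{i^*-1}$ and $f(u_0),\ldots,f(u_{i^*-1})$ agree up to $\delta\,\diam(A)$, yields $d_E(x, f(x)) = O_{\tilde d}(\sqrt{\delta}\,\diam(A))$.

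The main technical obstacle is sharpening the inductive rigidity bound so that the factor $1/h_{\min}$ appears with only a $\tilde d$-dependent power, rather than the prohibitive $(1/h_{\min})^{\tilde d}$ that a naive unrolling would give. This requires exploiting that the step at index $i^*$ is the only one that is tight, while earlier steps can be controlled using the actual (possibly much larger) altitudes $h_j$; one then feeds these per-step bounds back into the Jacobian estimate at step $i^*+1$ and beyond. Once this is in hand, the choice $\eta = \sqrt{\delta}\,\diam(A)$ is forced: balancing the rigidity bound $\asymp \delta\,\diam(A)^2/\eta$ against the near-affine-hull bound $\asymp \eta$ is exactly what produces the $\sqrt{\delta}$ exponent in the conclusion.
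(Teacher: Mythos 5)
The paper does not prove this proposition; it is cited verbatim from \cite{ATV01} (Section 2.6), so there is no in-paper argument to compare against. The remarks below therefore assess your sketch on its own merits.

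Your overall strategy---frame rigidity followed by a dichotomy on the minimum altitude $h_{\min}$, with threshold $\eta \asymp \sqrt{\delta}\,\diam(A)$---is sound, and you correctly flag the central difficulty: a naive coordinate-by-coordinate unrolling of the trilateration error produces a factor like $\prod_j (1/h_j)$, and balancing that against $\eta$ would give a $\delta^{1/(\tilde d + 1)}$ dependence rather than $\sqrt{\delta}$. But the fix you propose (``exploit that the step at index $i^*$ is the only one that is tight'') does not close the gap, and it starts from an imprecise picture: maximality forces the altitudes to be non-increasing ($h_1 = \diam(A) \ge h_2 \ge \cdots \ge h_{\tilde d}$), so $i^* = \tilde d$ always, and several consecutive altitudes may be simultaneously small---there is no single ``tight'' step. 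The missing ingredient is a second use of maximality that you do not invoke: for every $p \in A$ and every $j \ge 2$, $|p_j| \le \bigl(\sum_{k \ge j} p_k^2\bigr)^{1/2} = d\bigl(p, \mathrm{aff}\{u_0,\ldots,u_{j-1}\}\bigr) \le h_j$. Since the $j$-th coordinate of the pivot $u_j$ is exactly $h_j$, while every other point's $j$-th coordinate has magnitude at most $h_j$, the $1/h_j$ amplification in recovering the $j$-th coordinate is multiplied by a quantity of size at most $h_j$; the two cancel, and one obtains a per-coordinate error of $O_{\tilde d}(\delta\,\diam(A)^2/h_j)$ with no product over earlier steps. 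This is precisely what yields the clean $O(\delta\,\diam(A)^2/h_{\min})$ bound for the frame and for $x$, which handles the large-$h_{\min}$ branch.

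The same cancellation is also needed in the small-$h_{\min}$ branch, where your sketch is too coarse. You assert that near-isometry ``propagates'' the bound $d(x,\mathrm{aff}\{u_0,\ldots,u_{i^*-1}\}) < \eta$ to $f(x)$, but near-isometry controls distances to the $u_j$, not distances to an affine hull. Writing $p = (p',p'')$ with $p' = (p_1,\ldots,p_{i^*-1})$ and using the normalization so that $\mathrm{aff}\{u_0,\ldots,u_{i^*-1}\} = \mathbb{R}^{i^*-1}\times\{0\}$, one has $\|f(x)''\|^2 = d_E(f(\origin),f(x))^2 - \|f(x)'\|^2$. The first term tracks $\|x\|^2$ to within $O(\delta\,\diam(A)^2)$, but the second term only tracks $\|x'\|^2$ to within $O(\delta\,\diam(A)^2)$ if one again exploits $|x_j|\le h_j$ to cancel the $1/h_j$ in the per-coordinate error; without it the naive estimate $\|x'-f(x)'\|\cdot\diam(A)=O(\sqrt{\delta}\,\diam(A)^2)$ only yields $\|f(x)''\| = O(\delta^{1/4}\diam(A))$, which is too weak. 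So the single idea you are missing---the coordinate bound $|p_j|\le h_j$ forced by maximality of $U$ in $A$---is what closes both halves of your argument.
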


Consider the image $Q^* = \phi(Z^*)$ of an optimal embedding  $\phi: Z^* = X \setminus \oset^* \to \mathbb{R}^d$. 
Denote by $Q_d = \{ q_0 = \phi(y_0), q_1 = \phi(y_1), \ldots, q_d = \phi(y_d) \}$ and $g: Q_d \to P_d \subset \mathbb{R}^d$ the map that sends $q_i$ to $p_i$, for each $i\in [0, d]$. 
We also use $h: Y^* \to P_d$ to denote the $\eps \diam(X)$-near-isometric embedding computed in {\sf (Step-1)}. 

For any $x\in Z^*$, by the same argument as for the existence of $P_d$, we can show that there must exist a $\eps \diam(X)$-near-isometric embedding $h_x: Y_d \cup \{x\} \to \mathbb{R}^d$ of $(Y_d \cup \{x\}, \rho)$ such that the ordered sequence $\{ h_x(y_0), \ldots, h_x(y_d) \}$ is \mynormalized{}. Furthermore, Proposition \ref{prop:ATV} implies the following: 
\begin{claim}\label{claim:hx}
For any $x\in Z^*$ and any $h_x$ as described above, $d_E(\phi(x), h_x(x)) \le 2 c_d \sqrt{\eps} \diam(X)$. 
\end{claim}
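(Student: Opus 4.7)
The plan is to apply Proposition~\ref{prop:ATV} with $\tilde d = d$, with $A = \phi(Y^* \cup \{x\}) \subset \mathbb{R}^d$, ordered sequence $U = (\phi(y_0), \ldots, \phi(y_d))$, and the auxiliary map $f: A \to \mathbb{R}^d$ defined by $f(\phi(y_i)) = h_x(y_i)$ and $f(\phi(x)) = h_x(x)$. The point is to use a single common frame $U$ on which both $\phi$ and $h_x$ are simultaneously normalized, so that the proposition directly controls the displacement $d_E(\phi(x), h_x(x))$.

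First I would verify the hypotheses of the proposition. The sequence $U$ is normalized by the standing assumption (made just before the claim) that $\phi(Z^*)$ is normalized in $\mathbb{R}^d$. It is maximal in $A$ because it was chosen maximal in the larger set $\phi(Z^*) \supseteq A$, and the defining diameter and distance-to-affine-hull maxima remain maxima when one restricts the ambient point set. Finally, $f$ is normalized at $U$ because $f(U) = \{h_x(y_0), \ldots, h_x(y_d)\}$, which is normalized by the assumption on $h_x$.

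Next I would quantify the approximation error of $f$. Both $\phi$ and $h_x$ are $\eps\diam(X)$-near-isometric with respect to $\rho$ on $Y^* \cup \{x\}$, so a triangle inequality gives, for any $p,q \in Y^* \cup \{x\}$,
\[
\bigl| d_E(\phi(p), \phi(q)) - d_E(h_x(p), h_x(q)) \bigr| \leq 2\eps \diam(X).
\]
Hence $(f(A), d_E)$ is a $\delta \diam(A)$-approximation of $(A, d_E)$ with $\delta = 2\eps \diam(X)/\diam(A)$. Combined with the trivial bound $\diam(A) \leq \diam(\phi(Z^*)) \leq (1+\eps)\diam(X) \leq 2\diam(X)$ coming from $\phi$ being $\eps\diam(X)$-near-isometric (and $\eps < 1$), Proposition~\ref{prop:ATV} applied to $f$ yields
\[
d_E(\phi(x), h_x(x)) \leq c_d \sqrt{\delta}\, \diam(A) = c_d \sqrt{2\eps\diam(X)\cdot \diam(A)} \leq 2 c_d \sqrt{\eps}\, \diam(X),
\]
which is exactly the claim.

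The main obstacle is conceptual rather than computational: constructing a single map $f$ which is normalized at a common maximal frame and is near-isometric in the strong sense required by Proposition~\ref{prop:ATV}, so that the proposition can ``transport'' the closeness of the two competing embeddings into the pointwise bound we need. A minor subtlety is that if two distinct points of $Y^* \cup \{x\}$ happen to collide under $\phi$, one should interpret $f$ via the abstract preimage rather than as a map on the image set; this does not affect any of the distance comparisons that drive the argument.
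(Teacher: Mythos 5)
Your proof is correct and takes essentially the same approach as the paper: define an auxiliary map from $\phi(Y^*\cup\{x\})$ to $h_x(Y^*\cup\{x\})$ (the paper calls it $g_x$), show via the triangle inequality that it is a $2\eps\diam(X)$-near-isometry, and invoke Proposition~\ref{prop:ATV} with the bound $\diam(A)\le 2\diam(X)$. You are somewhat more explicit than the paper in verifying the maximality and normalization hypotheses of the proposition, but the argument is the same.
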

\begin{proof}
For simplicity of presentation, set $y_{d+1} = x$, and consider the set $A = \{ q_0, \ldots, q_d, q_{d+1} = \phi(x) \} \subset Q^*$, and the map $g_x: A \to \mathbb{R}^d$ defined as $g_x(q_i) = h_x(y_i)$ for any $i\in [0, d+1]$. 
Since both $\phi$ and $h_x$ are $\eps \diam(X)$-near-isometric maps, we have by triangle inequality that: for any $i, j \in [0, d+1]$, 
$$|d_E(q_i, q_j) - d_E(h_x(y_i), h_x(y_j)| \le |d_E(q_i, q_j) - \rho(y_i, y_j)| + |\rho(y_i, y_j)  - d_E(h_x(y_i), h_x(y_j))| \le 2\eps \diam(X). $$
In other words, this means that the map $g_x$ is a $2\eps\diam(X)$-nearisometry for $(A, d_E)$. Setting $\delta = 2\eps \diam(X)/\diam(A)$, it then follows from Proposition \ref{prop:ATV} that 
$$d_E(q_{d+1}=\phi(x), h_x(x)) \le c_d \sqrt{\delta} \diam(A) = c_d \sqrt{ 2\eps \diam(X) \diam(A)} \le 2c_d \sqrt{\eps} \diam(X). $$ 
The last inequality follows that since $\phi$ is $\eps \diam(X)$-nearisometry, we have that $\diam(A) \le \diam(Q^*) \le (1+\eps)\diam(Z^*) \le 2 \diam(X)$ for $\eps \le 1$. 
\end{proof}

On the other hand, note that the above statement is generic for any $x \in Z^*$ and any $\eps \diam(X)$-near-isometric map $h_x$ 
such that the ordered sequence $\{ h_x(y_0), \ldots, h_x(y_d) \}$ is \mynormalized{}. 
Hence we can choose $x  = y_i$ for any $i\in [0, d]$ and choose $h_x = h$ computed in (Step-1) $h: Y_d \to P_d$ that gives rise to $P_d$. 
It then follows from Claim \ref{claim:hx} that: 
\begin{equation}\label{eqn:Pd}
\text{for any } i\in [0, d], ~d_E(\phi(y_i), p_i = h(y_i)) \le 2 c_d \sqrt{\eps} \diam(X). 
\end{equation}
Finally for any $x \in X \setminus Y_d$ inspected in (Step-2), set $p_x = h_x(x)$ for any $\eps \diam(X)$-near-isometry $h_x: Y^* \cup \{x\} \to \mathbb{R}^d$ (which must exist as we argued earlier). 
Combining Claim \ref{claim:hx}, Eqn \eqref{eqn:Pd}, the triangle inequality and the fact that $\phi$ is an $\eps \diam(X)$-nearisometry, we thus have: 
\begin{align}
|\rho(y_i, x) - d_E(p_i, p_x)| &\le |\rho(y_i, x) - d_E(\phi(y_i), \phi(x)) | + |d_E(\phi(y_i), \phi(x)) - d_E(p_i, p_x)| \nonumber \\
&\le \eps \diam(X) + d_E(\phi(y_i), p_i) + d_E(\phi(x), p_x) \le \eps \diam(X) + 4c_d \sqrt{\eps}\diam(X). \label{eqn:Pdx}
\end{align}
Setting $C_d = 4c_d$, we obtain the following.
\begin{corollary}\label{cor:step2}
For any $x\in Z^*$, there exists $p_x$ satisfying the requirements in {\sf (Step-2)}. 
\end{corollary}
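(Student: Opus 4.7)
The plan is to produce, for each $x \in Z^* = X \setminus \oset^*$, a concrete candidate $p_x \in \mathbb{R}^d$ meeting the tolerance $|\rho(x, y_i) - d_E(p_x, p_i)| \le C_d \sqrt{\eps} \diam(X) + \eps \diam(X)$ demanded by {\sf (Step-2)}. Since {\sf (Step-2)} places $x$ into the outlier set $\oset$ only when no such $p_x$ exists, producing one explicitly will immediately certify that no element of $Z^*$ is added to $\oset$ during {\sf (Step-2)} when the algorithm examines the enumeration $Y_d = Y^*$. This is the key ingredient behind the $|\widehat{\oset}| \le 2k^*$ bound in Theorem \ref{thm:bicriteriaRd}.

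The construction of $p_x$ mimics the existence-plus-normalization argument that produced $P_d$ in {\sf (Step-1)}, but applied to the enlarged $(d+2)$-point subset $Y_d \cup \{x\} \subseteq Z^*$. Restricting the optimal near-isometry $\phi$ to this subset and composing with a suitable rigid motion of $\mathbb{R}^d$ yields an $\eps \diam(X)$-near-isometric embedding $h_x: Y_d \cup \{x\} \to \mathbb{R}^d$ whose image of the ordered sequence $(y_0,\ldots,y_d)$ is normalized. I would then simply set $p_x := h_x(x)$.

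The main obstacle is reconciling the two separately produced normalizations: $p_x$ is built from $h_x$, whereas the reference points $p_i$ arise from the independently chosen normalized near-isometry $h$ produced in {\sf (Step-1)}, so a priori $h_x(y_i)$ and $p_i$ need not agree. The bridge is Claim \ref{claim:hx}, which controls the drift between any normalized near-isometric image and the ``ground truth'' image under $\phi$. Applying it twice — first with the map $h$ at each $y_i$ to obtain $d_E(\phi(y_i), p_i) \le 2c_d \sqrt{\eps}\diam(X)$ (this is Eqn \eqref{eqn:Pd}), and second with the map $h_x$ at $x$ to obtain $d_E(\phi(x), p_x) \le 2c_d \sqrt{\eps}\diam(X)$ — provides the two ingredients needed. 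A triangle-inequality chain then replaces $\rho(y_i, x)$ by $d_E(\phi(y_i), \phi(x))$ (paying $\eps \diam(X)$ from the near-isometry of $\phi$) and shifts $\phi(y_i) \mapsto p_i$ and $\phi(x) \mapsto p_x$ (paying $2c_d \sqrt{\eps}\diam(X)$ each), exactly recovering the bound in Eqn \eqref{eqn:Pdx}. Setting $C_d := 4 c_d$ matches the tolerance used in {\sf (Step-2)} and completes the proof.
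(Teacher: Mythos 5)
Your proposal is correct and follows essentially the same route as the paper: you exhibit $p_x = h_x(x)$ for a normalized near-isometric embedding $h_x$ of $Y_d\cup\{x\}$ obtained from $\phi$, invoke Claim \ref{claim:hx} twice (once via $h$ to get Eqn~\eqref{eqn:Pd} controlling $d_E(\phi(y_i),p_i)$, once via $h_x$ to control $d_E(\phi(x),p_x)$), and close with the triangle-inequality chain yielding Eqn~\eqref{eqn:Pdx} with $C_d=4c_d$. This is precisely the argument the paper carries out in the paragraph preceding the corollary.
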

By a similar argument as in Eqn (\ref{eqn:Pdx}), we can also obtain: 
\begin{corollary}\label{lem:pairconsistency}
For any $z, z' \in Z^*$, let $p_z$ and $p_{z'}$ be their corresponding points in $\mathbb{R}^d$ computed in {\sf (Step-2)}. Then $| \rho(z, z') - d_E(p_z, p_{z'}) | \le C_d \sqrt{\eps} \diam(X) + \eps\diam(X)$. 
\end{corollary}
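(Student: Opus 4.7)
The plan is to reproduce the short derivation of \eqref{eqn:Pdx} essentially verbatim, except that the frame point $y_i$ and its image $p_i$ are now replaced by the second variable point $z'$ and its computed image $p_{z'}$. I would insert the optimal embedding $\phi$ as an intermediate via
\[
|\rho(z,z') - d_E(p_z, p_{z'})| \le |\rho(z,z') - d_E(\phi(z), \phi(z'))| + |d_E(\phi(z), \phi(z')) - d_E(p_z, p_{z'})|,
\]
then use that $\phi$ is an $\eps\diam(X)$-near-isometric embedding of $Z^* = X\setminus \oset^*$ to bound the first summand by $\eps\diam(X)$, and the reverse triangle inequality in $\mathbb{R}^d$ to bound the second summand by $d_E(\phi(z), p_z) + d_E(\phi(z'), p_{z'})$.

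The second step is to invoke Claim \ref{claim:hx} once for $z$ and once for $z'$, exactly as was done for the single variable $x$ in the derivation of \eqref{eqn:Pdx}. Recall from the proof of Corollary \ref{cor:step2} that, for each $x\in Z^*$, the point $p_x$ returned by (Step-2) is to be viewed as the image $h_x(x)$ of some $\eps\diam(X)$-near-isometry $h_x\colon Y^*\cup\{x\}\to \mathbb{R}^d$ that is normalized at $Y^*$; Claim \ref{claim:hx} then gives $d_E(\phi(x), p_x)\le 2c_d\sqrt{\eps}\diam(X)$. Applying this twice and chaining the inequalities yields
\[
|\rho(z,z') - d_E(p_z, p_{z'})| \le \eps\diam(X) + 4c_d\sqrt{\eps}\diam(X) = \eps\diam(X) + C_d\sqrt{\eps}\diam(X),
\]
where $C_d = 4c_d$ is the constant fixed right before \eqref{eqn:Pdx}, which matches the stated bound.

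The only point requiring care is that the bound $d_E(\phi(x), p_x)\le 2c_d\sqrt{\eps}\diam(X)$ arising from Claim \ref{claim:hx} is phrased for $\eps\diam(X)$-near-isometries, whereas an arbitrary $p_x$ merely satisfying the looser (Step-2) tolerance of $(C_d\sqrt{\eps}+\eps)\diam(X)$ would only make the associated map $h_x$ an $O(\sqrt{\eps})\diam(X)$-near-isometry, from which Proposition \ref{prop:ATV} alone would give the weaker bound $O(\eps^{1/4})\diam(X)$. The standard fix is to adopt the canonical choice $p_x = h_x(x)$ used in the existence argument of Corollary \ref{cor:step2}; with this convention, the derivation above is a direct pairwise analogue of \eqref{eqn:Pdx}, and no further rigidity argument is needed.
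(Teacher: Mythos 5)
Your argument is exactly the paper's: the paper's own proof of this corollary is the single sentence "By a similar argument as in Eqn (\ref{eqn:Pdx})", and your decomposition through the optimal embedding $\phi$ — bounding the first term by $\eps\diam(X)$ via near-isometry, the second by $d_E(\phi(z),p_z)+d_E(\phi(z'),p_{z'})$ via the reverse triangle inequality, and then applying Claim \ref{claim:hx} twice — is precisely the pairwise analogue of the derivation of \eqref{eqn:Pdx}, yielding $\eps\diam(X)+4c_d\sqrt{\eps}\diam(X)$ with $C_d=4c_d$. Your closing remark about the canonical choice is also correct and matches the authors' implicit convention: the text immediately preceding \eqref{eqn:Pdx} literally writes "set $p_x=h_x(x)$", so (Step-2) should be read as returning this witness rather than an arbitrary feasible point, since an arbitrary $p_x$ meeting only the $(C_d\sqrt{\eps}+\eps)\diam(X)$ tolerance would, as you observe, give only an $O(\eps^{1/4}\diam(X))$ bound on $d_E(\phi(x),p_x)$ through Proposition \ref{prop:ATV} and would not close the corollary at the stated scale.
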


\begin{lemma}
$|\hat{\oset}| \le 2 |\oset^*|$.
\end{lemma}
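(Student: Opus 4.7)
The plan is to exhibit one particular iteration of Step~0 for which the resulting candidate set $\oset\in\C$ already has size at most $2|\oset^*|$; since $\widehat{\oset}$ is chosen as the smallest set in $\C$, this will yield the claim. The natural choice is the iteration in which the algorithm enumerates $Y_d=Y^*$, where $Y^*\subseteq Z^*$ is the maximal \mynormalized{} sequence fixed earlier. For this iteration Step~1 succeeds by the argument given above (using $\phi(Y^*)$ as a witness), producing some $P_d$ of dimension $d$.

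I then split the outlier set produced at the end of Step~3 into the contribution $\oset_2$ from Step~2 and the contribution $W'$ from the vertex-cover computation in Step~3. To bound $|\oset_2|$, I will invoke Corollary~\ref{cor:step2}: for every $x\in Z^*$ a suitable $p_x$ exists, so no point of $Z^*$ is ever added to $\oset$ in Step~2. Therefore $\oset_2\subseteq \oset^*$, and in particular $\oset_2\cap W=\emptyset$ (the first coordinates of pairs in $W$ are exactly the points of $X\setminus(Y_d\cup \oset_2)$).

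To bound $|W'|$, I will show that $\oset^*\cap W$ (viewed as a vertex set of $G$) is itself a vertex cover of $G$. Indeed, Corollary~\ref{lem:pairconsistency} says that for any $z,z'\in Z^*$ whose assigned embeddings $(z,p_z),(z',p_{z'})$ both lie in $W$, one has $|\rho(z,z')-d_E(p_z,p_{z'})|\le C_d\sqrt{\eps}\diam(X)+\eps\diam(X)$; hence no edge of $G$ joins two such vertices, so every edge of $G$ has at least one endpoint with first coordinate in $\oset^*$. Consequently the minimum vertex cover of $G$ has size at most $|\oset^*\cap W|\le |\oset^*|-|\oset_2|$, and the $2$-approximation gives $|W'|\le 2(|\oset^*|-|\oset_2|)$.

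Combining the two bounds for this iteration I will get
\[
|\oset|=|\oset_2|+|W'|\le |\oset_2|+2\bigl(|\oset^*|-|\oset_2|\bigr)=2|\oset^*|-|\oset_2|\le 2|\oset^*|,
\]
and since $|\widehat{\oset}|\le |\oset|$ this proves the lemma. The only subtle point, and the one that needs the earlier work of this section, is justifying that Step~1 actually succeeds on $Y_d=Y^*$ with the correct target dimension; once Corollaries~\ref{cor:step2} and~\ref{lem:pairconsistency} are in hand the rest is a routine two-term accounting as above.
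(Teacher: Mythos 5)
Your proof is correct and follows essentially the same route as the paper: fix the iteration $Y_d = Y^*$, use Corollary~\ref{cor:step2} to conclude the Step-2 outliers lie in $\oset^*$, use Corollary~\ref{lem:pairconsistency} to show $\oset^*$ restricted to $W$ covers $G$, and close with the $2$-approximate vertex cover and $|\widehat\oset|\le|\oset|$. Your explicit two-term accounting at the end simply spells out what the paper states more tersely.
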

\begin{proof}
First, as argued earlier above Corollary \ref{cor:step2}, for any $x \in Z^*$, {\sf (Step-2)} will be able to compute the required $p_x \in \mathbb{R}^d$. Let $\oset_1$ denote the set $O$ at the end of {\sf (Step-2)}. It then follows that $\oset_1 \subset \oset^*$. 
Furthermore, by Corollary \ref{lem:pairconsistency}, for any edge $( (z, p), (z', p') )\in E$ in the graph constructed in Step-3, at least one of $z$ and $z'$ is necessarily in $\oset^*$. 
Hence $|\oset^* \setminus \oset_1|$ is at least the size of the vertex cover of $G$. Since we compute a $2$-approximation of the vertex cover, it follows that the set $\oset$ at the end of Step-3 satisfies that $|\oset| \le 2|\oset^*|$. 
This proves the claim as $|\hat{\oset}| \le |\oset|$. 
\end{proof}

This completes the proof of Theorem \ref{thm:bicriteriaRd}.

\paragraph{Computational issues.}
The computation of $P_d$ and $p_x$ in {\sf (Step-1)} and {\sf (Step-2)} require solving a system of $O(d)$ quadratic equations and inequalities, and the solution would require infinite precision. This can be addressed using real algebraic geometry theory for semi-algebraic sets which can output the solutions in arbitrary precision in time exponential in $O(d)$. 
We can also use the following ``griding" strategy: 
After {\sf (Step-1)}, we discretize the $d$-dimensional cube $[-3\diam(X), 3\diam(X)]^d$ where the grid length is  $\tau = \frac{\eps}{2d^2} \diam(X)$. 
In {\sf (Step-1)} and {\sf (Step-2)}, we look for solutions for $P_d$ and $p_x$ where coordinates fall on grid points. This requires us to relax the distance distortion bound by another factor of $\eps \diam(X)$. However it will not change the final statement in Theorem \ref{thm:bicriteriaRd}. 
The time to compute each near-isometry embedding is $(\frac{d}{\eps})^{O(d)}$. 
Overall, following a similar analysis as for Algorithm \ref{alg:outlierRd}, we conclude with Theorem \ref{thm:bicriteriaRdFinal}.

%

\end{document}